\newcommand\sbullet[1][.5]{\mathbin{\vcenter{\hbox{\scalebox{#1}{$\bullet$}}}}}
\theoremstyle{definition}
\newtheorem{lem}{Lemma}[section]
\DeclareMathOperator{\Diff}{Diff}
\DeclareMathOperator*{\argmin}{arg\,min}
\DeclareMathOperator*{\argmax}{arg\,max}
\DeclareMathOperator*{\var}{var}
\DeclareMathOperator*{\cov}{cov}
\DeclareMathOperator*{\tr}{tr}
\DeclareMathOperator*{\diag}{diag}
\newcommand{\Norm}[2][]{\lVert#2\rVert\ifthenelse{\isempty{#1}}{}{_{#1}}}
\newcommand{\BigNorm}[2][]{\left\lVert#2\right\rVert\ifthenelse{\isempty{#1}}{}{_{#1}}}
\newcommand{\InProd}[3][]{\langle#2,#3\rangle\ifthenelse{\isempty{#1}}{}{_{#1}}}
\newcommand{\BigInProd}[3][]{\left\langle#2,#3\right\rangle\ifthenelse{\isempty{#1}}{}{_{#1}}}
\begin{document}

\author{James Stokes$^{1}$}
\address{$^1$Center for Computational Quantum Physics and Center for Computational Mathematics, Flatiron Institute, New York, NY 10010 USA}
\author{Brian Chen$^{2}$}
\author{Shravan Veerapaneni$^2$}
\address{$^2$Department of Mathematics, University of Michigan, Ann Arbor, MI 48109 USA}

\title{Numerical and geometrical aspects of flow-based variational quantum Monte Carlo}
\maketitle
\begin{abstract}
This article aims to summarize recent and ongoing efforts to simulate continuous-variable quantum systems using flow-based variational quantum Monte Carlo techniques, focusing for pedagogical purposes on the example of bosons in the field amplitude (quadrature) basis. Particular emphasis is placed on the variational real- and imaginary-time evolution problems, carefully reviewing the stochastic estimation of the time-dependent variational principles and their relationship with information geometry. 
Some practical instructions are provided to guide the implementation of a PyTorch code. 
The review is intended to be accessible to researchers interested in machine learning and quantum information science.
\end{abstract}

\tableofcontents

\section{Introduction} 
 Recent years have witnessed a profitable interchange between the fields of machine learning, quantum many-body physics and quantum information science. This multidisciplinary interaction has been partially facilitated by the discovery that artificial neural networks provide a powerful inductive bias for parametrizing subsets of quantum many-body Hilbert space. Although the description of Hilbert-space vectors via neural networks renders exact linear algebra operations prohibitive for this subset of quantum states, the existence of efficient stochastic approximation algorithms \cite{mcmillan1965ground, carleo2017solving} called variational Monte Carlo (VMC) has enabled neural-network-based quantum states (NQS) to accurately reveal properties of the ground state for quantum spin systems, as well as to simulate their time evolution using a time-dependent variant of VMC (the so-called t-VMC) \cite{carleo2012localization, carleo2014light}. Since the inception of complex-valued restricted Boltzmann machines \cite{carleo2017solving}, the reach of neural-network quantum states has grown to encompass a diversity of quantum systems,  made possible by the use of increasingly sophisticated (often multi-layered) architectures. Another driver of interaction is the discovery of close analogies between VMC and variational quantum algorithms (VQAs). Recent work on quantum information geometry in particular, \cite{stokes2020quantum} has clarified the connection between the natural gradient descent in machine learning \cite{amari1998natural}, stochastic reconfiguration VMC \cite{sorella2007weak} and variational imaginary-time evolution in quantum computing \cite{yuan2019theory}. 

This tutorial paper is intended to serve as a self-contained review of flow-based VMC and t-VMC for continuous-variable quantum systems. For the purpose of concreteness we frame the discussion around the example of bosonic quantum systems, represented in the field amplitude basis. The field amplitude basis has not been a traditional focal point of the VMC literature\footnote{See however \cite{han2020deep,han2021neural,rindaldi2021,stokes2021continuous}.}, which has concentrated on non-relativistic systems that are more interpretable in the Fock basis. The field amplitude basis is natural, however, in systems with relativistic symmetry, where the regulated bosonic Hamiltonian is represented on an $L^2$-space as a simple Schr\"{o}dinger operator. The simplicity of the Hamiltonian therefore also offers pedagogical advantages. A possible computational advantage of the field amplitude basis is that it does not require artificially restricting the allowed mode occupation numbers to finite range for its numerical implementation.  In an effort to foster further interactions between machine learning and variational quantum simulation, we chose to advance a geometrical approach based on information geometry. This contrasts with previous work on variational quantum simulation which focuses on K\"{a}hler geometry (see \cite{hackl2020geometry} for a review).

        In a related parallel line of work, flow-based probabilistic neural networks have been applied to investigate ground state properties of lattice-regulated Euclidean quantum field theories \cite{PhysRevLett.125.121601}. This approach hinges on the existence of a quantum-classical mapping, under which properties of the ground state are related to classical expectation values with respect to a Gibbs distribution with an intractable normalizing factor. The aim of this approach is to accelerate, or eliminate, an otherwise expensive Markov chain Monte Carlo sampler by approximating the Gibbs measure using normalizing flows. The work described in the following aims to expand the applicability of flow-based methods to situations in which the quantum-classical mapping does not produce an efficiently-scaling sampling algorithm, which is understood to be a symptom of a so-called sign/phase problem.
        
        It is illuminating to compare and contrast the approach advocated here with proposals in the quantum computing literature for quantum field simulation. Like \cite{jordan2014quantum}, we adopt the field amplitude basis, although we remain agnostic about the form of the regulator since it does not alter the general structure of the resulting Schr\"{o}dinger operator (see \cite{liu2021towards} for a discussion of possibilities). The choice to focus on normalizing flows is motivated by their exact sampling properties and potential to exploit distributed computing using the embarrassing parallelism of Monte Carlo \cite{pfau2020integrable, zhao2021overcoming}.
        
        In an effort to make the paper accessible to a wide audience we have chosen an exposition emphasizing mathematical and numerical aspects, minimizing where possible, the necessary physics prerequisites. 
        Although a fully rigorous treatment is possible, mathematical rigor is not attempted in this work. Detailed appendices have been included containing proofs of all relevant claims as well as additional physics motivation. Although our target application is bosons, this paper can be read independently as a self-contained review of VMC for continuous-variable quantum systems.
The paper is organized as follows. In section \ref{sec:preliminaries} we provide a brief refresher on normalizing flows, dynamics of isolated quantum systems and time-dependent variational principles. Section \ref{sec:informationgeometry} is devoted to explaining the geometry of quantum states and of classical probability densities (information geometry). Time-dependent variational principles are then reviewed in this context in section \ref{sec:TDVPs}, explaining the role of normalization and holomorphy. The next section specializes to the stochastic approximation of the time-dependent variational principles using VMC and t-VMC, including a discussion of variance reduction. Section \ref{sec:comments} comments on prospects and challenges for modeling quantum states with symmetries using equivariant flows. In section \ref{sec:experiments} we sketch a worked example in PyTorch \cite{pytorch} using normalizing flows to prepare bosonic ground eigenfunctions, showing improved energy compared to optimized Gaussian states. Section \ref{sec:conclusions} concludes with a discussion of future directions.

\section{Notation}\label{sec:notation}
Let $L^p(\mathbb{R}^d) := L^p(\mathbb{R}^d;\mathbb{C})$ denote the normed space of complex-valued, $p$-integrable functions with $p \geq 1$ and norm
$
\Vert \cdot \Vert_p
$.
Denote by $\langle \cdot | \cdot \rangle$ the $L^2$ inner product, with the convention chosen to be linear in the second argument (anti-linear in the first argument). Denote by $\Diff(\mathbb{R}^d)$ the group of smooth bijections from $\mathbb{R}^d$ to itself with smooth inverse and binary operation given by composition. For random vectors $X$ and $Y$, valued in $\mathbb{C}^m$ and $\mathbb{C}^n$, respectively define the covariance matrix by
$
    \cov(X,Y) := \mathbb{E}[X Y^\dag] - \mathbb{E}[X] \mathbb{E}[Y]^\dag
    \in \mathbb{C}^{m\times n}
$, where $\dag$ denotes the conjugate transpose. 
Denote by $\mathbb{S}^n$, $\mathbb{S}^n_+$ and $\mathbb{S}^n_{++}$, respectively the real symmetric, symmetric positive semi-definite and symmetric positive definite matrices and similarly denote the corresponding complex Hermitian matrices as $\mathbb{H}^n,\mathbb{H}^n_+$ and $\mathbb{H}^n_{++}$. The identity matrix is denoted by $\mathbbm{1}$.

\section{Preliminaries}\label{sec:preliminaries}

Although a formulation of normalizing flows applicable to general Riemannian manifolds is possible
\cite{rezende2020normalizing}\footnote{Indeed necessary for the description of nonlinear sigma-model field theories.}, the exposition is considerably simplified by restricting attention to flows defined on Euclidean spaces. 
The starting point is to recall that the diffeomorphism group $\Diff(\mathbb{R}^d)$ acts naturally on the normed space $L^p(\mathbb{R}^d)$. In particular, a function $\psi \in L^p(\mathbb{R}^d)$ transforms under a diffeomorphism $f \in \Diff(\mathbb{R}^d)$ to $f\cdot\psi \in L^p(\mathbb{R}^d)$ defined for all $x\in\mathbb{R}^d$ by,
\begin{equation}\label{e:trans}
    (f\cdot \psi)(x) := \left|\det J_{f^{-1}}(x)\right|^{1/p} \psi\big(f^{-1}(x)\big) \enspace ,
\end{equation}
in such a way that the $p$-norm is preserved,
\begin{equation}\label{e:pnorm}
    \Vert f \cdot \psi \Vert_p = \Vert \psi \Vert_p \enspace .
\end{equation}
The action of $\Diff(\mathbb{R}^d)$ is compatible with the group operation in the sense that for  all $f,g \in \Diff(\mathbb{R}^d)$,
\begin{equation}\label{e:grouppcompat}
    f \cdot (g \cdot \psi) = (f\circ g)\cdot \psi \enspace .
\end{equation} 

Specializing to $p=1$, a normalized, non-negative (probability) density $p \in L^1(\mathbb{R}^d)$ is carried under the action of $f$ to a normalized probability density $f\cdot p$. The above observation has motivated the investigation of normalizing flows for probabilistic modeling \cite{rezende2015variational, hackett2021flow}. The aim of this endeavor is approximate a complex, multi-modal target probability density, starting from a simple normalized probability density $p$ (such as a multi-variate Gaussian), by choosing the diffeomorphism amongst a parametrized differentiable family of invertible multi-layer neural networks. If $\theta \in \mathbb{R}^n$ denotes the parameters indexing a family of diffeomorphisms $f_\theta$, then density estimation using normalizing flows is to be performed under the hypothesis $\{ p_\theta : \theta \in \mathbb{R}^n\}$ where $p_\theta := f_\theta \cdot p$.

The successful application of normalizing flows to density estimation tasks has motivated their generalization to quantum applications. In particular, setting $p=2$, normalized densities are replaced by normalizable wavefunctions $\psi \in L^2(\mathbb{R}^d)$ subject to the norm constraint $\Vert \psi \Vert_2 = 1$ and the diffeomorphism group acts on the Hilbert space $L^2(\mathbb{R}^d)$ by unitary transformations \cite{cranmer2019inferring, xie2021ab}. In particular, the linear operator $U_f : L^2(\mathbb{R}^d) \to L^2(\mathbb{R}^d)$ defined for each diffeomorphism $f\in\Diff(\mathbb{R}^d)$ by $U_f : \psi \mapsto f\cdot \psi$ is unitary. This fact, which follows immediately from \eqref{e:pnorm} using the polarization identity, can also be seen as a consequence of the following identity for all $\psi,\psi' \in L^2(\mathbb{R}^d)$, \begin{equation}\label{e:unitarity}
    \left\langle \psi \,\middle|\, U_f (\psi') \right\rangle = \big\langle  U_f^{-1}(\psi) \, \big|\, \psi' \big\rangle \enspace .
\end{equation}
A differentiable family of unit-normalized $L^2$ functions $\psi_\theta := f_\theta \cdot \psi$ is obtained by choosing a base function $\psi \in L^2(\mathbb{R}^d)$ satisfying $\Vert \psi \Vert_2 = 1$. Following von Neumann, the space of states for a quantum system described via the Hilbert space $L^2(\mathbb{R}^d)$ is given by set of rank-1 orthogonal projection operators (projectors). The projector $P_\psi$ onto the complex line spanned by the unit vector $\psi \in L^2(\mathbb{R}^d)$ is a linear map defined for all $\psi' \in L^2(\mathbb{R}^d)$ by
$
    P_\psi : \psi' \mapsto \langle \psi | \psi'\rangle \, \psi
$.
As we shall review in the subsequent sections, this projector viewpoint of quantum states turns out to be the most natural one for discussing time-dependent variational principles. 

The subject of this paper is isolated quantum systems whose dynamics is dictated by a known time-independent Hermitian Hamiltonian operator $H$. If the system is initialized in the state $P_\psi$, then the uninterrupted time evolution of the system is described by the sequence of projectors $[0,\infty) \ni t \mapsto P_{\exp(-iHt)\psi}$. It is likewise instructive to consider the sequence of states $[0,\infty) \ni t \mapsto P_{\exp(-Ht)\psi}$ which correspond to unphysical evolution along the imaginary time axis. An important application of imaginary-time evolution is to preparation of a ground eigenfunction, since the component $\psi_\perp$ of $\psi$ lying orthogonal to the ground space experiences exponential damping relative to the parallel component $\psi_\parallel$.

Given initial parameters $\theta_0 \in \mathbb{R}^n$, a time-dependent variational principle is a proposed sequence of parameters $(\theta_t)_{t\geq0}$ such that $(P_{\psi_{\theta_t}})_{t\geq0}$ optimally describes the exact projector evolution $(P_{\exp(-At)\psi_0})_{t\geq0}$ where $A$ denotes either $H$ or $iH$, for imaginary- or real-time evolution, respectively. In practice, the sequence of parameters is defined implicitly by a system of ordinary differential equations, which must be approximated, leading to accumulation of error with evolution time, in addition to the systematic bias originating in the finite capacity of the variational family. In the case of imaginary time evolution, error accumulation is not of concern if the ultimate goal is preparation of a ground eigenfunction for a Hamiltonian with bounded-below energy, since the Rayleigh-Ritz principle ensures that any trial wavefunction provides an upper bound for the exact energy. Indeed energy optimization using neural-network-based quantum states is an active field of research and a number of proposals have been put forward including stochastic sign gradient descent \cite{luo2019backflow} and Gauss-Newton \cite{webber2021rayleigh}.

\section{Information geometry}\label{sec:informationgeometry}
Although we will ultimately be concerned with the geometry of quantum states, it is illuminating to first consider the geometry of classical probability densities. Indeed, these notions will be seen to coincide for a wide family of quantum states.
There is a natural distance metric on the set of normalized probability densities called the Fisher-Rao distance. Given probability densities $p,q \in L^1(\mathbb{R}^d)$ the Fisher-Rao distance is defined by
\begin{equation}
    d_{\rm FR}(p,q) := \arccos\big(\big\langle p^{1/2}\big|q^{1/2} \big\rangle \big)
\end{equation}
where $p^{1/2} \in L^2(\mathbb{R}^d)$ denotes the pointwise square root. The Fisher-Rao distance is manifestly invariant under arbitrary diffeomorphisms $f \in \Diff(\mathbb{R}^d)$,
\begin{equation}\label{e:FRdiffinvariance}
    d_{\rm FR}(p,q) = d_{\rm FR}(f \cdot p,f\cdot q) \enspace .
\end{equation}
In order to expose the Riemannian structure underling $d_{\rm FR}$, it is useful to consider a parametrized differentiable family of probability densities $\{p_\theta : \theta \in \mathbb{R}^n\}$. Define the symmetric positive semi-definite information matrix $I(\theta) \in \mathbb{S}^{n}_+$ for all $\theta \in \mathbb{R}^n$,
\begin{equation}\label{e:FIM}
    I(\theta) = I[p_\theta] := \underset{x \sim p_\theta}{\mathbb{E}}\left[\nabla_\theta \log p_\theta(x) \, \nabla_\theta \log p_\theta(x)^T\right] \enspace ,
\end{equation}
which is invariant under diffeomorphisms $f \in \Diff(\mathbb{R}^d)$,
\begin{equation}\label{e:FIMdiffinvariance}
    I[f\cdot p_\theta] = I[p_\theta] \enspace .
\end{equation}
In addition, the information matrix transforms as a covariant tensor under diffeomorphisms  of the parameter manifold. In particular, for a diffeomorphism $\phi \in \Diff(\mathbb{R}^n)$, define $\widetilde{p}_\theta := p_{\phi(\theta)}$ and then
\begin{equation}\label{e:covariance}
    \widetilde{I}(\theta):=I[\widetilde{p}_\theta] = J_\phi(\theta)^{T}I(\phi(\theta))J_\phi(\theta) \enspace .
\end{equation}
The above observations concerning the information matrix are in accord with the fact that it provides the coefficient matrix for the infinitesimal line element obtained by restricting the Fisher-Rao distance to the parametric family,
\begin{equation}
	d_{\rm FR}^2(p_\theta, p_{\theta + {\rm d}\theta}) = \frac{1}{4} \sum_{\mu,\nu=1}^n I_{\mu\nu}(\theta) {\rm d} \theta^{\mu} {\rm d}\theta^{\nu} \enspace .
\end{equation}
Although the information matrix is not a Riemannian metric tensor since it fails the requirement of non-degeneracy, it is however, the pull-back of a Riemannian metric tensor on the infinite-dimensional manifold of strictly positive probability densities\footnote{The so-called the Fisher-Rao metric tensor, which is uniquely characterized by the requirement of diffeomorphism invariance \cite{bauer2016uniqueness}. The Fisher-Rao distance is the geodesic distance function corresponding to the Fisher-Rao metric tensor.}.

In passing to quantum state
space, the natural distance function generalizes to Fubini-Study, which assigns a distance between the projectors onto unit vectors $\psi,\psi' \in L^2(\mathbb{R}^d)$ as follows,
\begin{equation}
    d_{\rm FS}(P_\psi,P_{\psi'}) := \arccos(|\langle \psi | \psi' \rangle |) \enspace .
\end{equation}
The Fubini-Study distance inherits the diffeomorphism invariance of Fisher-Rao, and expands it to all unitary transformations $U :L^2(\mathbb{R}^d) \to L^2(\mathbb{R}^d)$,
\begin{equation}
    d_{\rm FS} (P_\psi,P_{\psi'}) = d_{\rm FS}\big(P_{U(\psi)},P_{U(\psi')}\big) \enspace .
\end{equation}
Paralleling the discussion for probability densities above, now consider a differentiable family of unit-normalized $L^2$ functions $\{ \psi_\theta : \theta \in \mathbb{R}^n \}$ and define the quantum geometric tensor \cite{berry1989quantum} for all $\theta \in \mathbb{R}^n$ as the Hermitian positive semi-definite matrix  $G(\theta) \in \mathbb{H}^{n}_+$ with components,
\begin{equation}\label{e:qgt}
	G_{\mu\nu}(\theta) := G_{\mu\nu}[\psi_\theta] := \left.\left\langle \frac{\partial \psi_\theta}{\partial \theta^{\mu}}  \right| \frac{\partial \psi_\theta}{\partial \theta^{\nu}} \right\rangle - \left.\left\langle \frac{\partial \psi_\theta}{\partial \theta^{\mu}} \right| \psi_\theta\right\rangle  \left\langle \psi_\theta \left| \frac{\partial \psi_\theta}{\partial \theta^{\nu}}  \right\rangle\right.  \enspace , \quad\quad\quad 1 \leq \mu,\nu \leq n
\end{equation}
whose tensorial property is confirmed by elementary calculus. The quantum geometric tensor is manifestly invariant under unitary transformations,
\begin{equation}
    G[U(\psi_\theta)] = G[\psi_\theta]
\end{equation}
and is additionally invariant under a local phase transformation of the parametrized family which has no classical analogue.
In particular, for any differentiable function $\omega : \mathbb{R}^n \to \mathbb{R}$,
\begin{equation}\label{e:phaseinv}
    G[\exp(i\omega(\theta))\psi_\theta] = G[\psi_\theta] \enspace .
\end{equation}
The real part of the quantum geometric tensor, $g(\theta) := \operatorname{Re}[G(\theta)]$ is necessarily a real symmetric positive semi-definite matrix\footnote{Which, incidentally, equals four times the quantum Fisher information matrix for pure states.} $g(\theta) \in \mathbb{S}^n_+$.
In direct analogy to the information matrix \eqref{e:FIM}, the matrix $g(\theta)$ is the coefficient matrix for the infinitesimal line element obtained by restricting the Fubini-Study distance to the parametric family
\begin{equation}
    d_{\rm FS}^2(P_{\psi_\theta},P_{\psi_{\theta + {\rm d}\theta}}) = \sum_{\mu,\nu=1}^n g_{\mu\nu}(\theta) {\rm d}\theta^{\mu} {\rm d}\theta^{\nu} \enspace .
\end{equation}
By an abuse of terminology, we will refer to $g(\theta)$ as a metric tensor.
In the special case when $\psi_\theta(x)$ is real-valued, classical and quantum information geometry coincide in the sense that
\begin{equation}\label{e:classicalquantum}
    g[\psi_\theta] = \frac{1}{4} I[\psi_\theta^2] \enspace .
\end{equation}

\subsection{Unnormalized wavefunctions and holomorphy}
It is often convenient to represent the unit-normalized family of functions $\psi_\theta$ via another family of functions $\Psi_{\theta} \in L^2(\mathbb{R}^d)$ whose normalization is unknown,
\begin{equation}\label{e:unnormalized}
    \psi_\theta(x) = \frac{\Psi_{\theta}(x)}{\sqrt{\langle \Psi_{\theta} | \Psi_{\theta} \rangle}} \enspace .
\end{equation}
If the quantum geometric tensor is expressed in terms of $\Psi_{\theta}$ one obtains the following useful expression,
\begin{equation}\label{e:qgt_unnormalized}
	G_{\mu\nu}(\theta) = \frac{1}{\langle \Psi_{\theta} | \Psi_{\theta} \rangle}\left.\left\langle \frac{\partial \Psi_{\theta}}{\partial \theta^{\mu}}  \right| \frac{\partial \Psi_{\theta}}{\partial \theta^{\nu}} \right\rangle -
	\frac{1}{\langle \Psi_{\theta} | \Psi_{\theta} \rangle^2}\left.\left\langle \frac{\partial \Psi_{\theta}}{\partial \theta^{\mu}} \right| \Psi_{\theta}\right\rangle  \left\langle \Psi_{\theta} \left| \frac{\partial \Psi_{\theta}}{\partial \theta^{\nu}}  \right\rangle\right.
	\enspace , \quad\quad\quad 1 \leq \mu,\nu \leq n
\end{equation}
In many important applications, the unnormalized function $\Psi_{\theta}$ is parametrized in terms of an even number $n = 2m$ of real parameters $\theta = \theta_1 \oplus \theta_2$ satisfying the following differential identities which correspond to the Cauchy-Riemann equations for the components of the complex vector $z :=\theta_1 + i \theta_2 \in \mathbb{C}^m$,
\begin{equation}\label{e:holo}
\nabla_{\theta_2} \Psi_\theta = i \nabla_{\theta_1}\Psi_\theta
\end{equation}
which is equivalent to the vanishing of the Wirtinger gradient of $\Psi_\theta$ with respect to $\overline{z}$,
\begin{equation}
    \nabla_{\overline{z}} \Psi_\theta := \frac{1}{2}\left(\nabla_{\theta_1}+i\nabla_{\theta_2}\right)\Psi_\theta = 0 \enspace .
\end{equation}
In this case the quantum geometric tensor and its real part are respectively given by,
\begin{equation}\label{e:holoG}
    G(\theta) =
\begin{bmatrix}
    S(z) & i S(z) \\
    -iS(z) & S(z)
\end{bmatrix}
\enspace , \quad \quad 
g(\theta) =
\begin{bmatrix}
    \operatorname{Re}[S(z)] & -\operatorname{Im}[S(z)] \\
    \operatorname{Im}[S(z)] & \operatorname{Re}[S(z)]
\end{bmatrix}
\enspace ,
\end{equation}
where $S(z)\in \mathbb{H}^m_+$ is the Hermitian postive semi-definite sub-block of the quantum geometric tensor corresponding to the $\theta_1$ axis, whose components can be expressed in terms of Wirtinger derivatives as follows,
\begin{equation}
    S_{ij}(z) = \frac{1}{\langle \Psi_{\theta} | \Psi_{\theta} \rangle}\left.\left\langle \frac{\partial \Psi_{\theta}}{\partial z^i}  \right| \frac{\partial \Psi_{\theta}}{\partial z^j} \right\rangle - 
	\frac{1}{\langle \Psi_{\theta} | \Psi_{\theta} \rangle^2}\left.\left\langle \frac{\partial \Psi_{\theta}}{\partial z^i} \right| \Psi_{\theta}\right\rangle  \left\langle \Psi_{\theta} \left| \frac{\partial \Psi_{\theta}}{\partial z^j}  \right\rangle\right. \enspace  , \quad \quad \quad 1\leq i,j \leq m
\end{equation}
\section{Time-dependent variational principles}\label{sec:TDVPs}
With the above information-geometric preliminaries we are prepared to discuss variational principles. Consider the evolution of the unit-normalized function $\psi_\theta$ under the operator $\exp(-A \delta t)$ where $A$ denotes either the Hermitian Hamiltonian $H$ or the skew-Hermitian operator $iH$ and $\delta t \geq 0$ denotes the evolution time. The unconstrained evolution $\psi_\theta \mapsto \exp(-A\delta t) \psi_\theta$ generically produces a function outside of the set $\{\psi_\theta\}_{\theta \in \mathbb{R}^n}$.
Now consider the constrained evolution $\psi_{\theta} \mapsto \psi_{\theta + \delta\theta}$ induced by a parameter shift $\delta\theta \in \mathbb{R}^n$. The optimal shift should be chosen to minimize the Fubini-Study distance between the associated projectors \cite{carleo2017solving}. Thus, given an initial parameter vector $\theta_0 \in \mathbb{R}^n$ and a step size $\delta t > 0$, define a sequence of parameter vectors $( \theta_k )_{k \in \mathbb{N} }$ by the following iteration
\begin{align}\label{e:projected}
    \theta_{k+1}
    & = \theta_k + \delta \theta_k \\
    \delta \theta_k
    & := \argmin_{\delta\theta \in \mathbb{R}^n} d_{\rm FS}\big(P_{\exp(-A\delta t)\psi_{\theta_k}},P_{\psi_{\theta_k+\delta\theta}}\big)
\end{align}
In the limit $\delta t \to 0$, the sequence approximates the solution of a system of ordinary differential equations with initial condition $\theta(0) = \theta_0$.
In particular, for $A=H$ and $A=iH$ we obtain respectively \cite{stokes2020quantum, barison2021efficient},
\begin{equation}\label{e:tdvp}
g(\theta(t)) \dot{\theta}(t)
=
\begin{cases}
-\operatorname{Re}[F(\theta(t))] \\
\operatorname{Im}[F(\theta(t))]
\end{cases}
\end{equation}
where we have defined the following complex vector $F(\theta) \in \mathbb{C}^n$ for all $\theta \in \mathbb{R}^n$,
\begin{equation}\label{e:force}
     F(\theta) := \left\langle \nabla_\theta \psi_\theta \middle| H \psi_\theta \right\rangle - \left\langle \nabla_\theta \psi_\theta \middle| \psi_\theta \right\rangle \left\langle \psi_\theta \middle| H \psi_\theta \right\rangle
\end{equation}
The arguments leading to the above evolution equations are reviewed in appendix \ref{app:tdvp}. The same equations are obtained by applying McLachlan's variational principle to the Liouville-von Neumann equation restricted to pure states \cite{yuan2019theory}. McLachlan's variational principle applied to the time-dependent Schr\"{o}dinger equation, however, yields a different set of evolution equations \cite{mclachlan1964variational, yuan2019theory}\footnote{See \cite[Appendix A.2.2 and B.2.2]{yuan2019theory} and \cite[Appendix A.1.2 and B.1.2]{yuan2019theory} for the variational forms of the von Neumann and time-dependent Schr\"{o}dinger equation, respectively. Further clarification about these variational principles is presented in appendix \ref{app:mclachlan}.}.
Using the fact that $\langle \psi_\theta | H \psi_\theta \rangle \in \mathbb{R}$ and $\langle \nabla_\theta \psi_\theta | \psi_\theta \rangle \in i\mathbb{
R}$, it follows from \eqref{e:tdvp} that the imaginary-time evolution equation coincides with Riemannian gradient flow in the geometry induced by the Fubini-Study metric \cite{stokes2020quantum},
\begin{equation}\label{e:gradientflow}
    g(\theta(t))\dot{\theta}(t) = -\nabla \mathcal{L}(\theta(t)) \enspace , \quad\quad \mathcal{L}(\theta) := \frac{1}{2}\langle \psi_\theta | H\psi_\theta \rangle \enspace ,
\end{equation}
which implies, as a trivial consequence of the positive semi-definiteness condition $g(\theta) \in \mathbb{S}^n_+$, that the energy $\mathcal{L}(\theta)$ is non-increasing under imaginary-time evolution, 
\begin{equation}
    \dot{\mathcal{L}}(t) = -\dot{\theta}(t)^T g(\theta(t)) \, \dot{\theta}(t) \leq 0 \enspace .
\end{equation}
Furthermore, in the special case of real-valued $\psi_\theta$, \eqref{e:classicalquantum} implies that imaginary-time evolution is a special case of natural gradient flow \cite{amari1998natural}. Finally, since $g(\theta)$ is typically a degenerate metric as discussed in section \ref{sec:informationgeometry} regularization is typically required in order to obtain a well-posed system of ordinary differential equations.
\subsection{Unnormalized wavefunctions and holomorphy}
In the case of unnormalized wavefunctions, the expression for $F(\theta)$ becomes,
\begin{equation}\label{e:force_unnormalized}
F(\theta) = \frac{\langle \nabla_\theta \Psi_{\theta} | H \Psi_{\theta} \rangle}{\langle \Psi_{\theta} | \Psi_{\theta} \rangle} - \frac{\langle \nabla_\theta \Psi_{\theta} | \Psi_{\theta} \rangle}{\langle \Psi_{\theta} | \Psi_{\theta} \rangle}\frac{\langle \Psi_{\theta} | H \Psi_{\theta}\rangle}{\langle \Psi_{\theta} | \Psi_{\theta} \rangle} \enspace .
\end{equation}
Suppose, in addition, that $n = 2m$ is even and that the holomorphic constraints \eqref{e:holo} are satisfied. Let us overload the notation by denoting $F(z) \in \mathbb{C}^m$ the subvector of $F(\theta)\in \mathbb{C}^{n}$ along the $\theta_1$ axis. Then it follows from \eqref{e:holo} and \eqref{e:force_unnormalized} that,
\begin{align}\label{e:holoforce}
F(\theta) & =
\begin{bmatrix}
    F(z) \\
    -iF(z)
\end{bmatrix}
\end{align}
It can then be shown that, under the holomorphic assumption, the evolution equations \eqref{e:tdvp} reduce to the following differential equations for the complex vector $z \in \mathbb{C}^m$,
\begin{equation}\label{e:holotdvp}
S(z(t))\dot{z}(t) =
\begin{cases}
- F(z(t)) \\
-iF(z(t))
\end{cases}
\end{equation}
The instantaneous rate of change of the loss function $\mathcal{L}$ under the evolution equations \eqref{e:holotdvp} is given by 
\begin{equation}\label{e:holorate}
    \dot{\mathcal{L}}(t) =
    \begin{cases}
    - \dot{z}(t)^\dag S(z(t)) \dot{z}(t) \\
    0
    \end{cases}
\end{equation}
Recalling the Hermitian positive semi-definiteness of $S(z) \in \mathbb{H}^m_+$, the above equations imply that the energy is non-increasing or conserved under the imaginary- or real-time evolution respectively.
\section{Stochastic estimation}

\subsection{VMC and t-VMC}
Now we discuss numerical solution of \eqref{e:tdvp} using stochastic estimation, assuming an efficient algorithm to compute $x \mapsto [\psi_\theta(x), \nabla_\theta \psi_\theta(x), (H\psi_\theta)(x)]$ and an efficient algorithm to generate unbiased samples according to the probability density $|\psi_\theta(x)|^2$. Most literature on VMC and time-dependent VMC \cite{carleo2012localization, carleo2014light, carleo2017solving, ido2015time} has focused on the assumption of an efficient mapping $x \mapsto [\Psi_{\theta}(x), \nabla_\theta \Psi_{\theta}(x), (H\Psi_{\theta})(x)]$ and approximate sampling from the density $|\psi_\theta(x)|^2$ using Markov Chain Monte Carlo\footnote{See however \cite{rindaldi2021,han2020deep,xie2021ab} for continuum and \cite{sharir2020deep,hibat2020recurrent} for discrete quantum systems, respectively.}. Define the Born probability density $\rho_\theta(x) \in [0,\infty)$, the wavefunction score $\sigma_\theta(x) \in \mathbb{C}^n$ and the local energy $l_\theta(x) \in \mathbb{C}$ as follows,
\begin{equation}\label{e:local_energy}
    \rho_\theta(x) := |\psi_\theta(x)|^2 \enspace , \quad \quad \sigma_\theta(x) := \frac{\nabla_\theta \psi_\theta(x)}{\psi_\theta(x)} \enspace , \quad\quad l_\theta(x) := \frac{(H\psi_\theta)(x)}{\psi_\theta(x)} \enspace .
\end{equation}
It is then a simple exercise to confirm that the quantities $g(\theta)$ and $F(\theta)$, $\mathcal{L}(\theta)$ and $\nabla \mathcal{L}(\theta)$ can be expressed as the expectation values of random variables with respect to the Born probability density. In particular,
\begin{align}\label{e:gF}
    G(\theta) 
    = \cov(\sigma_\theta,\sigma_\theta)^T \enspace , \quad\quad\quad
    F(\theta)
    = \cov(l_\theta,\sigma_\theta)^T \enspace ,
\end{align}
The expressions \eqref{e:gF} provide the basis for a stochastic approximate solution of \eqref{e:tdvp} called stochastic reconfiguration and t-VMC, respectively, which use Monte Carlo methods to approximate the covariances, in combination with a suitable time-marching scheme (e.g., forward Euler). In the special case of imaginary-time evolution, the algorithm is known as stochastic reconfiguration \cite{sorella2007weak}. If, in addition $\psi_\theta \in \mathbb{R}$ for all $\theta \in \mathbb{R}^n$ then, stochastic reconfiguration becomes stochastic natural gradient descent for the objective $\mathcal{L}(\theta)$.

\subsection{Stochastic optimization and variance reduction}\label{sec: stochastic_optimization_and_variance_reduction}
Stochastic reconfiguration can be understood as a special case of stochastic gradient-based optimization for the objective $\mathcal{L}(\theta)$. To see this, let 
$\hat{\mathcal{L}}_\theta(x)$ be any unbiased estimator for the loss function so that
\begin{align}\label{e:loss}
    \mathcal{L}(\theta) = \mathbb{E}\left[ \hat{\mathcal{L}}_\theta(x) \right] \enspace .
\end{align}
Then by the log-derivative trick,
\begin{equation}\label{e:stochoptim}
    \nabla \mathcal{L}(\theta) = \mathbb{E}\left[
    \left(\hat{\mathcal{L}}_\theta(x)\mathbbm{1} - \frac{B}{2}\right)\nabla_\theta \log \rho_\theta(x) \right]
    +
    \mathbb{E}\left[\nabla_\theta
    \hat{\mathcal{L}}_\theta(x)\right] \enspace ,
\end{equation}
where $B \in \mathbb{R}^{n \times n}$ is an arbitrary matrix and we have used the fact that $\mathbb{E} [\nabla_\theta\log \rho_\theta(x)] = 0$.  The canonical estimator which has been pursued most widely in the literature corresponds to the local energy defined in \eqref{e:local_energy},
\begin{equation}\label{e:canonical}
    \hat{\mathcal{L}}_\theta^{\rm (can)}(x) := \frac{1}{2}l_\theta(x) \enspace .
\end{equation}
The canonical estimator has a number of desirable properties including the fact the gradient of the objective \eqref{e:stochoptim} becomes independent of the gradient of the estimator. Specifically, plugging \eqref{e:canonical} into \eqref{e:stochoptim} and using Hermiticity of $H$ we obtain,
\begin{equation}\label{e:canongrad}
    \nabla \mathcal{L}(\theta) = \operatorname{Re} \mathbb{E} \big[ \big(l_\theta(x)\mathbbm{1} - B \big) \overline{\sigma_\theta}(x) \big] \enspace ,
\end{equation}
which coincides with stochastic reconfiguration for the choice of baseline $B = \mathbb{E}[l_\theta(x)]$.
In addition, the variance of the stochastic objective  function using the canonical estimator is proportional to the quantum variance of the Hamiltonian in the quantum state $P_{\psi_\theta}$,
\begin{equation}
    \var(l_\theta) := \cov(l_\theta,l_\theta) = \langle \psi_\theta | H \psi_\theta \rangle - \langle \psi_\theta | H^2 \psi_\theta \rangle \enspace ,
\end{equation}
which follows from Hermiticity of $H$.
The canonical estimator thus has the desirable property that its variance approaches zero when $\psi_\theta$ approaches any eigenfunction of $H$. This zero-variance principle is an attractive feature of stochastic reconfiguration compared to other stochastic optimization problems, and can be exploited when numerically approaching a ground eigenfunction. It turns out that the canonical estimator is not the only estimator available in continuous-variable VMC, however, and we will discuss one such alternative in the experiments section.

For normalized trial functions, it has been shown empirically \cite{sharir2020deep, han2020deep, hibat2020recurrent, xie2021ab} that the stochastic reconfiguration choice of baseline $B=\mathbb{E}[l_\theta(x)]$ has reduced variance compared to vanishing baseline. In order to better understand this variance reduction property, let us define the following gradient estimator with arbitrary baseline $B$,
\begin{equation}
    \hat{\nabla}_{\theta,B}(x)
    := \operatorname{Re}\big[\big( l_\theta(x)\mathbbm{1} - B\big) \overline{\sigma_\theta}(x)\big] \enspace .
\end{equation}
Anticipating the variance-reduction property of the baseline, introduce a convex objective function for the matrix $B$; namely, the total variation of the random vector $\hat{\nabla}_{\theta,B}$,
\begin{equation}\label{e:Bloss}
    V(B) := \tr \big[\cov(\hat{\nabla}_{\theta,B})\big] \enspace .
\end{equation}
The stationary points of \eqref{e:Bloss} describe a linear system of equations for the matrix $B$. In appendix \ref{app:baseline}, it is shown that the system can be solved under the ad-hoc assumption that $l_\theta$ and $\sigma_\theta$ are statistically independent under $\rho_\theta$. In particular, one finds that $B$ is an approximate multiple of the identity matrix, $B \approx \mathbb{E}[l_\theta(x)] \mathbbm{1}$. This provides a heuristic justification for the variance reduction procedure utilized in \cite{han2020deep,hibat2020recurrent}.

\section{Additional comments on quantum flows}\label{sec:comments}
 \subsection{Group equivariant dynamics from flows}
An attractive feature of normalizing flows is that they are compatible with group symmetries in the following sense.
\begin{lem}\label{lem:rep}
Let $G \leq O(d,\mathbb{R})$ be an orthogonal matrix group and $\rho : G \to \mathbb{C}^\times$ a one-dimensional representation of $G$. Suppose that $0 \neq \psi \in L^p(\mathbb{R}^d)$ and $f \in \Diff(\mathbb{R}^d)$  transform as
\begin{align}
    \psi(gx)  = \rho(g) \, \psi(x) \enspace , \quad\quad\quad f(gx) = g \, f(x) \enspace , & \quad\quad\quad \textrm{for all $g\in G$ and $x \in \mathbb{R}^d$} \\
\intertext{Then $\rho$ is a unitary representation and $f\cdot\psi \in L^p(\mathbb{R}^d)$ transforms as}
    (f\cdot\psi)(gx) = \rho(g) \, (f\cdot\psi)(x) \enspace , & \quad\quad\quad \textrm{for all $g\in G$ and $x \in \mathbb{R}^d$} \enspace .
\end{align}
\end{lem}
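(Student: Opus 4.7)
The plan is to handle the two conclusions separately. First I would prove $|\rho(g)|=1$ by a direct $p$-norm computation that uses Lebesgue invariance under the orthogonal action, and then I would compute $(f\cdot\psi)(gx)$ straight from the definition \eqref{e:trans} and read off the equivariance. The only nontrivial input in the second step is that the Jacobian prefactor in \eqref{e:trans} is itself $G$-invariant.

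For unitarity, since $G \leq O(d,\DR)$ every $g \in G$ preserves Lebesgue measure, so changing variables $x \mapsto gx$ gives
\begin{equation*}
\Norm[p]{\psi}^p = \int_{\DR^d} |\psi(x)|^p \, dx = \int_{\DR^d} |\psi(gx)|^p \, dx = |\rho(g)|^p \, \Norm[p]{\psi}^p .
\end{equation*}
Because $\psi \neq 0$ the norm $\Norm[p]{\psi}$ is strictly positive, so $|\rho(g)|=1$, i.e.\ $\rho$ takes values in the unit circle, as required for a unitary one-dimensional representation.

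For the equivariance of $f\cdot\psi$, I would first convert the hypothesis $f(gx)=gf(x)$ into the corresponding statement for $f^{-1}$ by substituting $x = f^{-1}(y)$, which yields $f^{-1}(gy) = g\, f^{-1}(y)$ for all $y \in \DR^d$ and $g \in G$. Differentiating this identity in $y$ and using the chain rule gives $J_{f^{-1}}(gy)\, g = g\, J_{f^{-1}}(y)$, so $J_{f^{-1}}(gy) = g\, J_{f^{-1}}(y)\, g^{-1}$ is a similarity transform of $J_{f^{-1}}(y)$; by multiplicativity of $\det$ one then obtains $\det J_{f^{-1}}(gy) = \det J_{f^{-1}}(y)$. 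Combining this Jacobian identity with the transformation of $\psi$ and the equivariance of $f^{-1}$,
\begin{align*}
(f\cdot\psi)(gx) &= |\det J_{f^{-1}}(gx)|^{1/p}\, \psi\big(f^{-1}(gx)\big) \\
&= |\det J_{f^{-1}}(x)|^{1/p}\, \psi\big(g\, f^{-1}(x)\big) \\
&= \rho(g)\, |\det J_{f^{-1}}(x)|^{1/p}\, \psi\big(f^{-1}(x)\big) = \rho(g)\, (f\cdot\psi)(x) .
\end{align*}

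The main subtlety is the Jacobian step, where it is crucial that $g$ is a constant orthogonal matrix (so its own differential is $g$ itself) and that the resulting similarity transform leaves the determinant invariant. Were the hypothesis weakened from $G \leq O(d,\DR)$ to a subgroup of $\GL(d,\DR)$, both the change-of-variables step in the norm calculation and the Jacobian calculation for $f\cdot\psi$ would pick up a factor of $|\det g|$, which would propagate into $\rho$ and modify the statement; it is precisely the orthogonality that makes these factors trivial.
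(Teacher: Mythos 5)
Your proof is correct and follows essentially the same route as the paper: the identical change-of-variables computation of the $p$-norm (using $|\det g|=1$ and $\|\psi\|_p\neq 0$) to get $|\rho(g)|=1$, followed by the same direct evaluation of $(f\cdot\psi)(gx)$ from the definition \eqref{e:trans} using $f^{-1}(gx)=g\,f^{-1}(x)$ and the $G$-invariance of $|\det J_{f^{-1}}|$. The only difference is that where the paper cites \cite{papamakarios2021normalizing} for the identity $|\det J_{f^{-1}}(gx)|=|\det J_{f^{-1}}(x)|$, you prove it in-line via the chain rule and the similarity transform $J_{f^{-1}}(gy)=g\,J_{f^{-1}}(y)\,g^{-1}$, which is a nice self-contained addition; note, though, that this step needs only that $g$ is a constant invertible matrix (the similarity transform kills $\det g$ regardless), so orthogonality is genuinely used only in the norm computation, and your closing remark slightly overstates its role in the Jacobian step.
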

The proof is an elementary generalization of \cite[Lemma 1]{papamakarios2021normalizing}. The above lemma has important implications for quantum simulation because it enables to model a flexible family of normalized $G$-equivariant functions using a simple normalized $G$-equivariant base function $\psi \in L^2(\mathbb{R}^d)$ and a family of $G$-equivariant diffeomorphisms. The set of such $G$-equivariant functions forms a Hilbert subspace of $L^2(\mathbb{R}^d)$, which is stable under real- or imaginary time evolution by a $G$-equivariant Hamiltonian $H$ satisfying,
\begin{equation}
    U_g \circ H = H \circ U_g \enspace, \quad\quad\quad \textrm{for all $g\in G$} \enspace ,
\end{equation}
where the unitary operator $U_g$ acts on the Hilbert space as $U_g(\psi) = \psi(g^{-1}x)$ for all $\psi \in L^2(\mathbb{R}^d)$ (recalling that $|{\det g}|=1$).
If combined with the time-dependent variational principle, the above construction enables the approximation of real- or imaginary-time dynamics within the $G$-equivariant subspace of states. In other words, flows can describe dynamically closed superselection sectors of the Hilbert space. This possibility was recently investigated in \cite{xie2021ab} using the sign representation of a permutation subgroup $G\leq O(d,\mathbb{R)}$ for the purpose of modeling spinless fermions.

\subsection{Universal approximation}
Although universal approximation has been proven under the assumption of strict positivity \cite{papamakarios2021normalizing}, complex-valued flows pose additional subtleties. In particular, as already noted in \cite{xie2021ab},  normalizing flows cannot change the topology of the zero level set for the base function and similarly for the level sets of complex phase. More precisely, the level sets are diffeomorphic,
\begin{equation}\label{e:levelsets}
    L_0(\operatorname{mod}f\cdot\psi) = f \big( L_0(\operatorname{mod}\psi)\big) \enspace, \quad \quad \quad \forall\theta \in [0,2\pi) : L_\theta(\arg f \cdot\psi) = f\big( L_\theta(\arg \psi) \big) \enspace ,
\end{equation}
where $\operatorname{mod}$ denotes the complex modulus and $L_c(\cdot)$ denotes the level set of a function corresponding to the real value $c \in \mathbb{R}$.
 Although the above identities pose an obstruction to universal approximation, this limitation can be easily overcome in practice by promoting the base to a trainable function, or by multiplying the output of the flow with a learnable complex phase.

\section{Experiments} \label{sec:experiments}
\subsection{States and Hamiltonian}
 Let $p = (1/i) \nabla $ denote the momentum operator canonically conjugate to $x$ and
consider the following Hamiltonian operator represented on a suitable subspace of the  Hilbert space $L^2(\mathbb{R}^d)$ by
\begin{align}\label{e:ham}
    H 
        & = \frac{1}{2} 
     \begin{bmatrix}
     x \\ p
     \end{bmatrix}\sbullet
     h
     \begin{bmatrix}
     x \\ p
     \end{bmatrix}
    +
    \frac{1}{4!}\sum_{ijkl} \lambda_{ijkl} \,x_i x_j x_k x_l
    \enspace ,
\end{align}
where $h \in\mathbb{S}^{2d}$ is a symmetric matrix, $\lambda_{ijkl}$ are the components of a real tensor $\lambda \in \mathbb{R}^{d \times d \times d \times d}$ and $\sbullet$ denotes the standard inner product for Euclidean space $\mathbb{R}^{2d}$. The physical significance of the above Hamiltonian is that it describes an indefinite number of bosons occupying $d$ possible modes\footnote{A motivating example from field theory is provided in appendix \ref{app:field}.}.
Additional assumptions about $h$ and $\lambda$ are required to ensure the existence of a ground eigenfunction. If, for example $\lambda=0$, then we require $h \in \mathbb{S}^{2d}_{++}
$ to be symmetric positive definite. In this case, the exact ground eigenfunction is represented by a family of trial functions $\psi_{\rm G} \in L^2(\mathbb{R}^d)$ of the following Gaussian form\footnote{
See 
appendix \ref{app:free} for an example in one dimension.},
\begin{equation}\label{eq: gaussian_wave_fcn}
    \psi_{\rm G}(x) = \left(\det\frac{A}{\pi}\right)^{1/4}  \exp\left[-\frac{1}{2}(x-\mu)^T (A + iB) (x-\mu)\right]
    \enspace ,
\end{equation}
where the variational parameters are constrained to the manifold $(\mu,A,B)\in \mathbb{R}^d \times \mathbb{S}_{++}^{d} \times \mathbb{S}^d$.
Consider the following block decomposition of the symmetric matrix $h \in \mathbb{S}^{2d}$
\begin{equation}
    h
    =
    \begin{bmatrix}
    h_{xx} & h_{xp} \\
    h_{px} & h_{pp}
    \end{bmatrix}
    \enspace ,
\end{equation}
where $h_{px} = h_{xp}^T$ and where $h_{xx},h_{pp} \in \mathbb{S}^d$ are symmetric.

The experiments focus on the problem of approximating a ground eigenfunction via the method of natural gradient descent, using a non-canonical estimator of the gradient obtained using an adjoint representation of the Hamiltonian, the details of which are deferred to appendix \ref{sec:ibp_appendix}. For simplicity, we exemplify the method in the special case of $h_{pp}=\mathbbm{1}$ and $h_{xp} = 0$, where 
it should be stressed that
this example is only for illustrative purposes since quantum Monte
 Carlo does not suffer from a sign-problem when $h_{xp}=0$. In addition, since we are targeting the ground eigenfunction rather than the full imaginary-time trajectory, we may employ any step size schedule consistent with descreasing energy. In particular, we combined the natural gradient method with the Adam optimizer.
In order to ensure boundedness of the Hamiltonian from below, we chose the interaction tensor $\lambda_{ijkl}=3\delta_{ij}\delta_{kl} u_{ik}$ for some symmetric positive definite matrix $u \in \mathbb{S}^d_{++}$.  The resulting Hamiltonian has a unique ground space spanned by a strictly positive eigenfunction \cite[Section 3.3]{glimm2012quantum}. By positivity of the ground function, if the Hamiltonian is $G$-equivariant for some $G \leq O(d,\mathbb{R})$, then it follows that the ground eigenfunction transforms in the trivial representation. Thus, provided that
$G$ is a finite group of computationally tractable order, a trial function for the $G$-invariant ground eigenfunction can be chosen as the square root of a mixture density formed by symmetrizing a classical normalizing flow $p_{\theta}$ as follows,
\begin{equation}
    \psi_\theta(x)^2 = \frac{1}{|G|} \sum_{g\in G} p_\theta(gx) \enspace . \label{eq: symmetrized_flow}
\end{equation}
In the experiments the matrices $u$ and $h_{xx}$ were chosen randomly to reflect our agnosticism about the nature of the regularization. It follows that the only remaining symmetry of the Hamiltonian is $G=\mathbb{Z}_2$, whose non-trivial element is implemented by field space inversion $x \mapsto -x$. Since $\mathbb{Z}_2$ is a finite group of order 2, the mixture density approach is computationally efficient in this case.

\subsection{Experimental setup}
Recall that under the above simplifications the Hamiltonian is specified by a symmetric matrix $h_{xx} \in \mathbb{S}^{d}$ and a symmetric positive-definite matrix $u \in \mathbb{S}^d_{++}$.
For each problem dimension $d \in \{2,5,10,25,50,100\}$ a random Hamiltonian was selected by choosing the $h$ and $\lambda$ parameters in \eqref{e:ham} subject to the constraints described in the previous subsection, specifically that $h$ is a symmetric block diagonal matrix of the form $h=\diag(h_{xx},\mathbbm{1})$ with $h_{xx}$ symmetric negative definite\footnote{That is, $-h_{xx} \in \mathbb{S}^d_{++}$.} and that $\lambda_{ijkl}=3\delta_{ij}\delta_{kl} u_{ik}$ for symmetric positive definite $u \in \mathbb{S}^d_{++}$. The relevant matrix ensembles and sampling procedures are described in  Appendix \ref{sec:generating_u_and_h}. 

For each randomly selected Hamiltonian, the ground eigenfunction is approximated by representing it using a $\mathbb{Z}_2$-symmetrical trial function of the form \eqref{eq: symmetrized_flow} where $p_\theta$ was chosen to be a RealNVP-based normalizing flow \cite{realnvp2017} with variational parameters $\theta \in \mathbb{R}^n$. Starting from random initialization, the parameters $\theta$ of the neural network were updated using stochastic natural gradient-based optimization of the loss function \eqref{e:loss}, employing the adjoint representation of the Hamiltonian. Stochastic estimates of the gradient on each iteration were obtained using PyTorch to back-propagate \eqref{e:loss}, without the use of baseline adjustment. Additional details on the normalizing flow architecture and training procedure can be found in Appendix \ref{sec: architecture}. Following the recent work of \cite{hackett2021flow}, adiabatic retraining was also explored (Appendix \ref{sec: adiabatic_retraining} for additional detail). 

As a baseline, the results are compared to a real-valued Gaussian trial state \eqref{eq: gaussian_wave_fcn}, for which the variational energy can be computed analytically (see Appendix \ref{sec: gaussian_state_approx}). Optimization over the parameter manifold $(\mu, A) \in \mathbb{R}^d \times \mathbb{S}^d_{++}$ was performed using the Pymanopt toolbox \cite{pymanopt}.

\subsection{Results}

The energies found via the normalizing flow approaches and the Gaussian approximation are shown in Figure \ref{fig:boxplot}. We can infer that the normalizing flow methods generally lead to lower estimates of the ground state energy than the Gaussian state approximation. In addition, compared to unsymmetrized flows, symmetrized flows with adiabatic retraining yield lower energies in higher dimensions. In Figure \ref{fig:histograms_combined}, the potential function and the approximations of the ground-state probability density found via normalizing flows or the Gaussian wave approximation are visualized for problem dimension $d=2$. Notice that while the symmetrized normalizing flow is able to find the two expected modes for the given the potential function, the unsymmetrized flow and Gaussian approximation collapse around one of the modes.
\begin{figure}
    \centering
    \includegraphics[width=\textwidth]{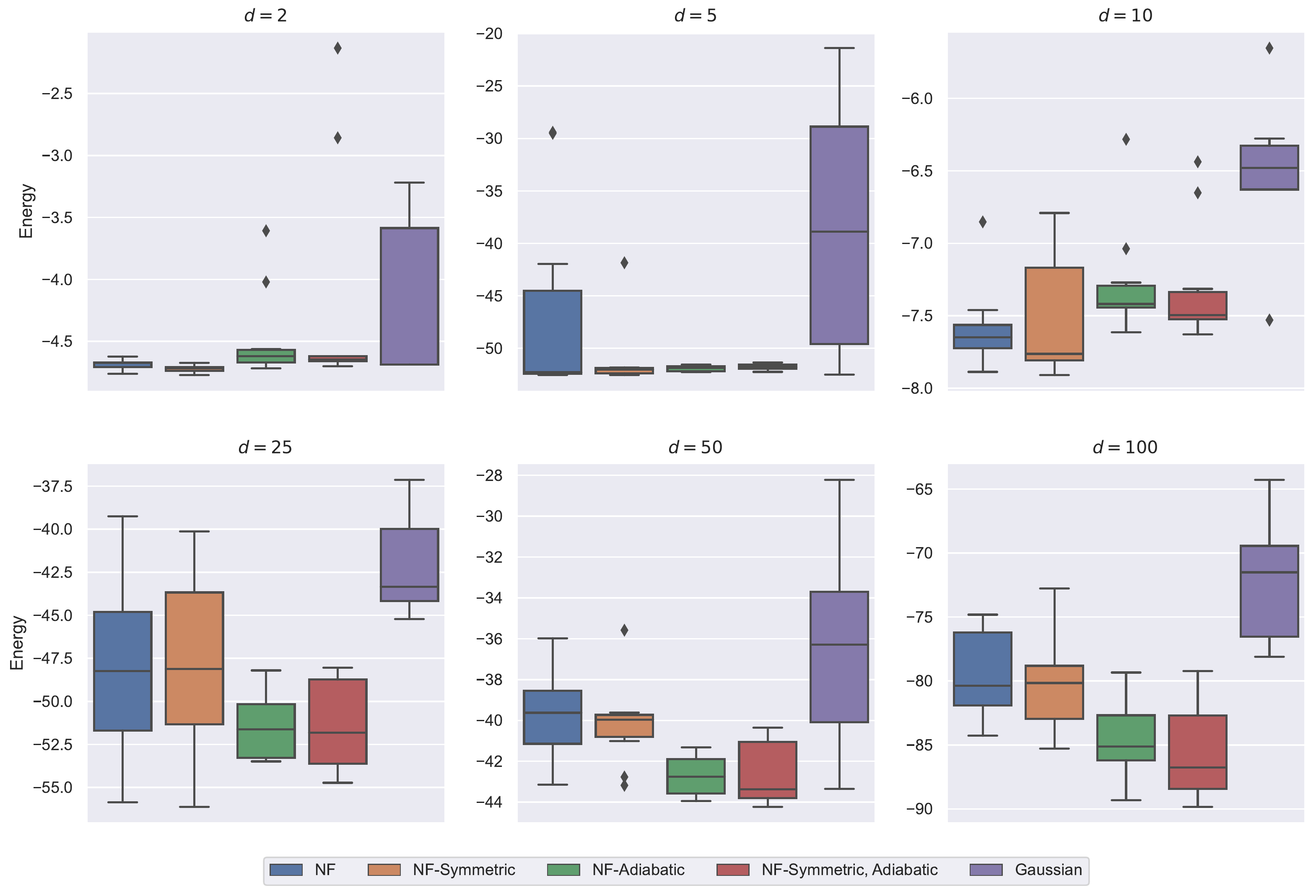}
    \caption{\textit{A boxplot showing the energies found via normalizing flows and Gaussian state approximation (results shown for ten random initializations). ``NF" and ``NF-Symmetric" refer to the unsymmetrized and symmetrized normalizing flows, respectively (see \eqref{eq: symmetrized_flow}). The results labeled with ``adiabatic" are trained using adiabatic retraining, as described in Appendix \ref{sec: adiabatic_retraining}.}}
    \label{fig:boxplot}
\end{figure}

\begin{figure}
    \centering
    \includegraphics[width=\textwidth]{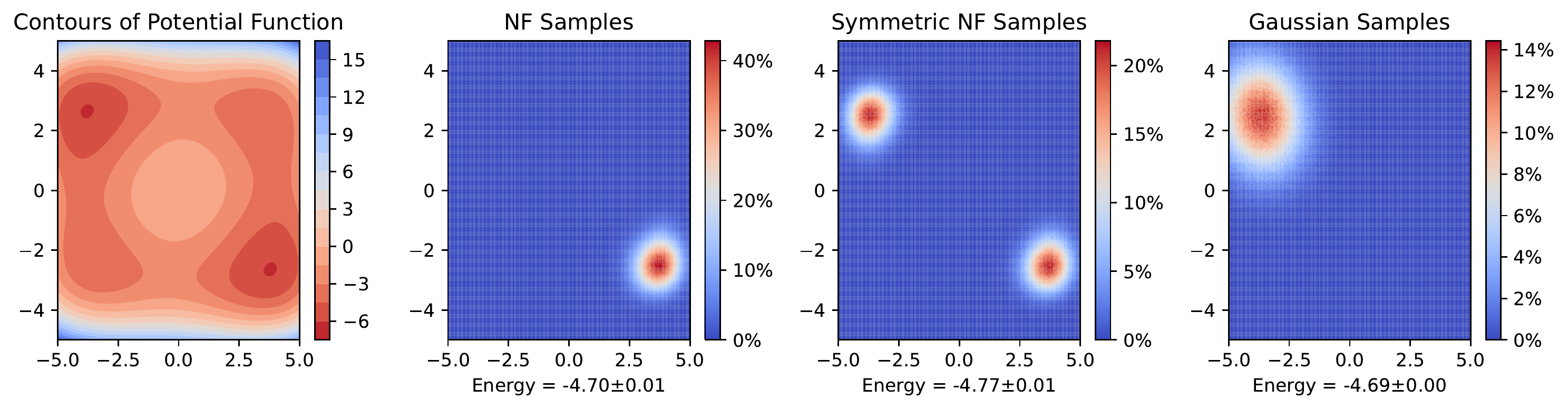}
    \caption{\textit{From left to right: Contours of the potential function and density plots from the corresponding optimized wave functions for $d=2$ (unsymmetrized normalizing flows, symmetric normalizing flows, and Gaussian wave function approximation).}}
    \label{fig:histograms_combined}
\end{figure}

\section{Discussion and future directions}
\label{sec:conclusions}
In summary, flow-based parametrizations provide a promising class of trial wavefunctions for use in the continuous-variable VMC, although a number of challenges remain to be solved. Let us conclude by summarizing open problems. An obvious limitation of flow-based parametrizations is their lack of holomorphy, which is required to ensure energy conservation during variational real-time evolution. It would be very interesting to reconcile the holomorphic constraint with the exact sampling property of flows in the continuum. Since VMC methods suffer from finite sampling effects, it will be interesting to undertake finite-sample analysis of the different gradient estimators and to further explore variance reduction strategies. It would also be interesting to better understand the approximation power of $G$-equivariant flows. 

It will be natural to extend the experiments to address the sign-problem, including the determination of ground states for non-stoquastic systems characterized by $h_{xp} \neq 0$, or real-time evolution.  In particular, it would also be interesting to numerically investigate the extent to which energy conservation is violated in practice.

An important outstanding challenge is to find physical applications of the scheme which exhibit an advantage compared to the best known algorithm. Aspirational targets include charged scalar fields at finite chemical potential \cite{weir2010studying} or variational calculation of scattering \cite{liu2021towards}.

\section{Acknowledgements}

J.S.~thanks Giuseppe Carleo, Damian Hofmann, Di Luo, Matija Medvidovi\'{c}, Jannes Nys, and Gabriel Pescia for helpful discussions. Authors gratefully acknowledge support from NSF under grant DMS-2038030. This research was supported in part through computational resources and services provided by the Advanced Research Computing (ARC) at the University of Michigan.

\newpage
\begin{appendix}
\section{McLachlan's variational principles}\label{app:mclachlan}
This section constructs an (admittedly contrived) example demonstrating the failure of McLachlan's variational principle for the time-dependent Sch\"{o}dinger equation. This example complements \cite[Section 4.2]{yuan2019theory} which considered real time evolution of a single qubit. Consider the quantum simple harmonic oscillator
\begin{equation}\label{eq:simple_harmonic_oscillator}
    H = \frac{1}{2}\big(p^2 + x^2\big)
\end{equation}
initialized in the state given by projection onto the ground eigenfunction,
\begin{equation}\label{eq:simple_harmonic_oscillator_ground_eigenfcn}
    \psi_0(x) = \frac{1}{\pi^{1/4}} \exp\left[-\frac{1}{2} x^2\right]
\end{equation}
The exact projector dynamics under real-time evolution is clearly given by the trivial constant sequence $(P_{\psi_0})_{t \geq 0}$. Now consider the following normalized variational family
\begin{equation}\label{e:d=1}
    \psi_{\theta}(x) = \left(\frac{a}{\pi}\right)^{1/4} \exp\left[-\frac{1}{2}(a + ib) x^2\right]
\end{equation}
parametrized in terms of $\theta := (\log a, b) \in \mathbb{R}^2$, which ensures the positivity constraint $a>0$. Recall that variational Liouville-von Neumann equation and the variational time-dependent Schr\"{o}digner equation (TDSE) are given, respectively by
\begin{equation}
    g(\theta(t)) \dot{\theta}(t) = \operatorname{Im}[F(\theta(t))] \enspace , \quad \quad \quad \widetilde g(\theta(t)) \dot{\theta}(t) = \operatorname{Im}[\widetilde F(\theta(t))]
\end{equation}
where
\begin{align}
    F_\mu(\theta)
    & = \left\langle \frac{\partial \psi_\theta}{\partial \theta^\mu} \middle| H \psi_\theta \right\rangle - \left\langle \frac{\partial \psi_\theta}{\partial \theta^\mu} \middle| \psi_\theta \right\rangle \langle \psi_\theta | H \psi_\theta \rangle & \widetilde F_\mu(\theta)
    & = \left\langle \frac{\partial \psi_\theta}{\partial \theta^\mu} \middle| H \psi_\theta \right\rangle \\
    g_{\mu\nu}(\theta) 
    & = \operatorname{Re}\left[\left.\left\langle \frac{\partial \psi_\theta}{\partial \theta^{\mu}}  \right| \frac{\partial \psi_\theta}{\partial \theta^{\nu}} \right\rangle - \left.\left\langle \frac{\partial \psi_\theta}{\partial \theta^{\mu}} \right| \psi_\theta\right\rangle  \left\langle \psi_\theta \left| \frac{\partial \psi_\theta}{\partial \theta^{\nu}}  \right\rangle\right.\right] & \widetilde g_{\mu\nu}(\theta) 
    & = \operatorname{Re}\left[\left.\left\langle \frac{\partial \psi_\theta}{\partial \theta^{\mu}}  \right| \frac{\partial \psi_\theta}{\partial \theta^{\nu}} \right\rangle \right]
\end{align}
In the example above we compute
\begin{align}
    \langle \psi_\theta | H \psi_\theta \rangle 
    & = \frac{1 + a^2 + b^2}{4a^2} \\
    \operatorname{Im}[\widetilde F(\theta)]
    & = 
    \begin{bmatrix}
    \frac{b}{4} \\
    -\frac{1}{16} + \frac{3(1+b^2)}{16a^2}
    \end{bmatrix} \\
    \operatorname{Im}[F(\theta)]
    & = 
    \begin{bmatrix}
    \frac{b}{4} \\
    -\frac{1}{16} + \frac{3(1+b^2)}{16a^2}
    \end{bmatrix}
    -
    \begin{bmatrix}
    0 \\
    \frac{1}{16} + \frac{1+b^2}{16a^2}
    \end{bmatrix}
    \\
    & = \begin{bmatrix}
    \frac{b}{4} \\
    -\frac{1}{8} + \frac{1+b^2}{8a^2}
    \end{bmatrix} \\
    \widetilde g(\theta)
    & =
    \begin{bmatrix}
    \frac{1}{8} & 0 \\
    0 & \frac{3}{16a^2}
    \end{bmatrix} \\
    g(\theta)
    & =
    \begin{bmatrix}
    \frac{1}{8} & 0 \\
    0 & \frac{3}{16a^2}
    \end{bmatrix}
    -
    \begin{bmatrix}
    0 & 0 \\
    0 & \frac{1}{16a^2}
    \end{bmatrix} \\
    & =
    \begin{bmatrix}
    \frac{1}{8} & 0 \\
    0 & \frac{1}{8a^2}
    \end{bmatrix}
\end{align}
Putting the above pieces together we obtain the following variational Liouville-von Neumann equation
\begin{equation}\label{e:vN}
    \frac{{\rm d}}{{\rm d} t}
    \begin{bmatrix}
    \log a \\
    b
    \end{bmatrix}
    =
    \begin{bmatrix}
    2b \\
    1-a^2 + b^2
    \end{bmatrix}
\end{equation}
and the following variational TDSE,
\begin{equation}\label{e:TDSE}
    \frac{{\rm d}}{{\rm d} t}
    \begin{bmatrix}
    \log a \\
    b
    \end{bmatrix}
    =
    \begin{bmatrix}
    2b \\
    1-\tfrac{1}{3}a^2 + b^2
    \end{bmatrix}
\end{equation}
If the system \eqref{e:vN} is initialized at the origin of the $\theta = (\log a, b) \in \mathbb{R}^2$ coordinates, then it will clearly reproduce the exact projector dynamics since the right-hand side of \eqref{e:vN} vanishes. In contrast, it can be verified that there is no choice of initialization which produces the exact projector under \eqref{e:TDSE}.

\section{Justification for choice of baseline}\label{app:baseline}
In this appendix argue that the stationary points of \eqref{e:Bloss} are approximately solved by the stochastic reconfiguration baseline.
Starting from the definition of the loss function \eqref{e:Bloss} and recalling that $\hat\nabla_{\theta,B}(x)$ is real-valued,
\begin{align}
    V(B)
    & = \tr \big[\cov\big(\hat\nabla_{\theta,B}\big)\big] \\
    & = \tr \left\{ \mathbb{E}\left[\hat\nabla_{\theta,B}(x) \otimes \hat\nabla_{\theta,B}(x)\right] - \mathbb{E}\left[\hat\nabla_{\theta,B}(x) \right]\otimes\mathbb{E}\left[ \hat\nabla_{\theta,B}(x)\right] \right\} \\
     & = \tr \left\{ \mathbb{E}\left[\hat\nabla_{\theta,B}(x) \otimes \hat\nabla_{\theta,B}(x)\right] - \nabla \mathcal{L}(\theta) \otimes \nabla\mathcal{L}(\theta) \right\} \\
     & = \mathbb{E}\left[\hat\nabla_{\theta,B}(x)^T \hat\nabla_{\theta,B}(x)\right] - \nabla \mathcal{L}(\theta)^T \nabla\mathcal{L}(\theta)
\end{align}
Now using the fact that the second term above is manifestly independent of $B$ we obtain,
\begin{align}
    \frac{\partial V(B)}{\partial B}
    & =  \frac{\partial }{\partial B}\mathbb{E}\left[ \hat\nabla_{\theta,B}(x)^T \hat\nabla_{\theta,B}(x) \right] \\
    & = -2\mathbb{E}\left[\hat\nabla_{\theta,B}(x) \otimes \operatorname{Re}\big[\overline{\sigma_\theta}(x)\big] \right] \\
    & = -2\mathbb{E}\Big[\operatorname{Re}\left[\big(l_\theta(x) - B\big)\overline{\sigma_\theta}(x)\right] \otimes \operatorname{Re}\big[\overline{\sigma_\theta}(x)\big] \Big] \\
    & = -2\mathbb{E}\Big[\operatorname{Re}\left[\big(\overline{l_\theta}(x) - B\big)\sigma_\theta(x)\right] \otimes \operatorname{Re}\big[\sigma_\theta(x)\big] \Big]
\end{align}
Setting the gradient to zero we obtain,
\begin{equation}\label{e:optimalbaseline}
    B \, \mathbb{E} \Big[ \operatorname{Re}[\sigma_\theta(x)] \otimes \operatorname{Re}[\sigma_\theta(x)] \Big] = \mathbb{E} \Big[ \operatorname{Re}[\overline{l_\theta}(x) \sigma_\theta(x)]  \otimes \operatorname{Re}[\sigma_\theta(x)]\Big]
\end{equation}
Now the right-hand side is
\begin{align}
    \mathbb{E} \Big[ \operatorname{Re}[\overline{l_\theta}(x)\sigma_\theta(x)]  \otimes \operatorname{Re}[\sigma_\theta(x)]\Big]
    & =
    \frac{1}{2}\mathbb{E} \Big[ \left(\overline{l_\theta}(x)\sigma_\theta(x) + l_\theta(x)\overline{\sigma_\theta}(x)\right)  \otimes \operatorname{Re}[\sigma_\theta(x)]\Big] \\
    & = \frac{1}{2}\mathbb{E} \Big[ \overline{l_\theta}(x) \sigma_\theta(x)  \otimes \operatorname{Re}[\sigma_\theta(x)] + l_\theta(x)\overline{\sigma_\theta}(x)  \otimes \operatorname{Re}[\sigma_\theta(x)]\Big] \\
    & \approx \frac{1}{2}\mathbb{E} \big[ \overline{l_\theta}(x)\big] \mathbb{E}\Big[\sigma_\theta(x)  \otimes \operatorname{Re}[\sigma_\theta(x)]\Big] + \frac{1}{2} \mathbb{E}\big[ l_\theta(x)\big]\mathbb{E}\Big[\overline{\sigma_\theta}(x)  \otimes \operatorname{Re}[\sigma_\theta(x)]\Big] \\
    & = \frac{1}{2} \mathbb{E}\big[ l_\theta(x)\big]\mathbb{E}\big[\big(\sigma_\theta(x) + \overline{\sigma_\theta}(x) \big)  \otimes \operatorname{Re}[\sigma_\theta(x)]\Big] \\
    & = \mathbb{E}\big[l_\theta(x)\big]\mathbb{E} \Big[ \operatorname{Re}[\sigma_\theta(x)]  \otimes \operatorname{Re}[\sigma_\theta(x)]\Big]
\end{align}
where in the third equality we have assumed that $l_\theta$ and $\sigma_\theta$ are approximately independent under $\rho_\theta$. Plugging back into the right-hand side of \eqref{e:optimalbaseline} we find the approximate solution $B \approx \mathbb{E}[l_\theta(x)] \mathbbm{1}$.

\section{Quadratic Hamiltonian in one dimension}\label{app:free}
In one problem dimension ($d=1$) and in the limit of vanishing interaction potential,
\begin{equation}
    H = \frac{1}{2} 
    \begin{bmatrix}
    x \\ p
    \end{bmatrix}
    \sbullet
    \begin{bmatrix}
    h_{xx} & h_{xp} \\
    h_{xp} & h_{pp}
    \end{bmatrix}
    \begin{bmatrix}
    x \\
    p
    \end{bmatrix}
\end{equation}
where $h_{xx} h_{pp}-(h_{xp})^2 > 0$
and the ground eigenfunction is of the form \eqref{e:d=1} with
\begin{align}
    a & = \frac{\sqrt{h_{xx} h_{pp}-(h_{xp})^2}}{h_{pp}} \\
    b & = \frac{h_{xp}}{h_{pp}}
\end{align}
\section{Derivation of time-dependent variational principles}\label{app:tdvp}
In this section we review the derivation of \eqref{e:tdvp}, synthesizing the results of \cite{stokes2020quantum, barison2021efficient}. In the following summation over repeated indices is implied. Denote $\widetilde{\psi}_\theta := e^{-A \delta t} \psi_\theta$. Then
\begin{equation}\label{e:projected}
    \argmin_{\delta\theta \in \mathbb{R}^n} d_{\rm FS}(P_{\widetilde{\psi}_\theta}, P_{\psi_{\theta+\delta\theta}})
    = \argmax_{\delta\theta \in \mathbb{R}^n}  \frac{\left| \left\langle \widetilde{\psi}_\theta | \psi_{\theta + \delta\theta} \right\rangle \right|}{|\langle \widetilde{\psi}_\theta | \widetilde{\psi}_\theta \rangle \langle \psi_{\theta+\delta\theta} | \psi_{\theta+\delta\theta} \rangle|}
	 = \argmax_{\delta\theta \in \mathbb{R}^n}  \left| \left\langle \widetilde{\psi}_\theta | \psi_{\theta + \delta\theta} \right\rangle \right|^2  \enspace ,
\end{equation}
where we used the monotonicity of elementary functions and the normalization of $\psi_{\theta + \delta\theta}$. We have
\begin{align}
	\langle \widetilde{\psi}_\theta |\psi_{\theta + \delta \theta} \rangle
		& =
		\langle \widetilde{\psi}_\theta | \psi_{\theta} \rangle + \left\langle  \widetilde{\psi}_\theta \left| \frac{\partial \psi_\theta}{\partial \theta^{\mu}} \right\rangle\right.\delta \theta^{\mu} 
		+ \frac{1}{2}\left\langle \widetilde{\psi}_\theta \left| \frac{\partial^2 \psi_\theta}{\partial \theta^{\mu} \partial \theta^{\nu}} \right\rangle \right. \delta \theta^{\mu} \delta \theta^{\nu} + \cdots \enspace .
\end{align}
So Taylor expanding $|\langle \widetilde{\psi}_\theta | \psi_{\theta + \delta \theta} \rangle|^2$ to quadratic order in the displacement gives,
\begin{align}
	|\langle \widetilde{\psi}_\theta | \psi_{\theta + \delta \theta} \rangle|^2
	& = |\langle \widetilde{\psi}_\theta | \psi_{\theta} \rangle|^2 +
		\left[\langle \psi_\theta | \widetilde{\psi}_\theta \rangle \left\langle \widetilde{\psi}_\theta \left| \frac{\partial \psi_\theta}{\partial \theta^{\mu}} \right\rangle \right. + \left.\left\langle \frac{\partial \psi_\theta}{\partial \theta^{\mu}} \right| \widetilde{\psi}_\theta \right\rangle \langle \widetilde{\psi}_\theta | \psi_\theta \rangle \right]\delta \theta^{\mu}  +  \\
	& \quad + \left[ \left.\left\langle \frac{\partial \psi_\theta}{\partial \theta^{\mu}} \right| \widetilde{\psi}_\theta \right\rangle \left\langle \widetilde{\psi}_\theta \left| \frac{\partial \psi_\theta}{\partial \theta^{\nu}} \right\rangle\right. +  \frac{1}{2} \langle \psi_\theta | \widetilde{\psi}_\theta \rangle \left\langle \widetilde{\psi}_\theta \left| \frac{\partial^2 \psi_\theta}{\partial \theta^{\mu} \partial \theta^{\nu}} \right\rangle\right. +  \frac{1}{2}\left.\left\langle \frac{\partial^2 \psi_\theta}{\partial \theta^{\mu} \theta^{\nu}} \right| \widetilde{\psi}_\theta \right\rangle \langle \widetilde{\psi}_\theta | \psi_\theta \rangle  \right]\delta \theta^{\mu} \delta \theta^{\nu} + \cdots \enspace  \notag .
\end{align}
Expanding the exponential $e^{-A \delta t}$ in $\delta t$ and neglecting cubic-order terms in the multi-variable Taylor expansion in $\delta\theta$ and $\delta t$,
\begin{equation}
	|\langle \widetilde{\psi}_\theta | \psi_{\theta + \delta \theta} \rangle|^2 =|\langle \widetilde{\psi}_\theta | \psi_{\theta} \rangle|^2  -2\operatorname{Re}\left[\left.\left\langle \frac{\partial \psi_\theta}{\partial \theta^{\mu}}\right| A \psi_\theta \right\rangle + \langle A\psi_\theta | \psi_\theta \rangle \left.\left\langle \frac{\partial \psi_\theta}{\partial \theta^{\mu}} \right| \psi_\theta \right\rangle \right] \delta \theta^{\mu} \delta t - \operatorname{Re}[G_{\mu\nu}(\theta)]\delta \theta^{\mu} \delta\theta^{\nu} + \cdots \enspace ,
\end{equation}
The first-order optimality condition $0=\frac{\partial}{\partial \delta\theta^{\mu}}|\langle \widetilde{\psi}_\theta | \psi_{\theta + \delta \theta} \rangle|^2$, at lowest order in $\delta\theta$ and $\delta t$, thus gives
\begin{align}
	0
		& =  -\operatorname{Re}\left[\left.\left\langle \frac{\partial \psi_\theta}{\partial \theta^{\mu}} \right| A \psi_\theta \right\rangle + \langle A\psi_\theta | \psi_\theta \rangle \left.\left\langle \frac{\partial \psi_\theta}{\partial \theta^{\mu}} \right| \psi_\theta \right\rangle \right] \delta t - \operatorname{Re}[G_{\mu\nu}(\theta)] \delta \theta^{\nu} + \cdots \enspace .
\end{align}
In the limit $\delta t \to 0$, neglecting higher order terms gives \eqref{e:tdvp}.

\section{Field theory motivation}\label{app:field}
In this section we use the heuristic arguments to motivate a Hamiltonian of the form \eqref{e:ham}. Consider the Hamiltonian for a neutral scalar field with quartic self-coupling defined, for simplicity, on the compact interval\footnote{Compactness is convenient because it avoids the possibility of spontaneous symmetry breakdown.} $I=[0,L]$ with periodic boundary conditions
\begin{equation}
    H  = \int_{I} {\rm d}x \, h(x) \, , \quad \quad \quad h(x) = \frac{1}{2} \big[ \pi(x)^2 + \phi'(x)^2 + \mu\, \phi(x)^2 \big] + \frac{\lambda}{4!}\phi(x)^4
\end{equation}
where the field amplitude and the momentum density satisfy $[\phi(x), \pi(y)]=i\delta(x-y)$ for $x,y\in I$.
If we now expand the field operators $\phi$ and $\pi$ using an orthonormal system of periodic real-valued functions $( f_i )_{i \in \mathbb{N}}$,
\begin{equation}
    \phi(x) = \sum_{i\in\mathbb{N}} \hat \phi_i f_i(x) \enspace , \quad \quad \quad \pi(x) = \sum_{i\in\mathbb{N}} \hat \pi_i f_i(x)
\end{equation}
then we find that the operator-valued coefficients $\hat \phi_i$ and $\hat \pi_i$ satisfy the standard canonical commutation relations $[\hat \phi_i, \hat \pi_j]=i\delta_{ij}$. Substituting back into the Hamiltonian, using orthonormality and truncating the infinite series using the $d$th partial sums, we obtain a regulated Hamiltonian which is represented on $L^2(\mathbb{R}^d)$ by the Schr\"{o}dinger operator \eqref{e:ham} with the following identifications 
\begin{equation}
    h =
    \begin{bmatrix}
    \mu\mathbbm{1} + \alpha & 0 \\
    0 & \mathbbm{1}
    \end{bmatrix}
    , \quad\quad \alpha_{ij} = \int_{I} {\rm d} x \, f_i'(x)f_j'(x), \quad\quad \lambda_{ijkl} = \lambda \int_{I} {\rm d} x \, f_i(x)f_j(x)f_k(x)f_l(x)
\end{equation}
Repeating the above analysis for a charged scalar field with nonzero chemical potential we obtain a similiar Hamiltonian in which the off-diagonal blocks of $h$ are nonzero.

\section{Technical proofs}
\begin{proof}[Proof of \eqref{e:grouppcompat}]
Let $f,g \in \Diff(\mathbb{R}^d)$ and $\psi \in L^p(\mathbb{R}^d)$. Then
\begin{align}
    (f \circ g) \cdot \psi(x)
        & = \left|\det J_{g^{-1}\circ f^{-1}}(x)\right|^{1/p} \psi\big(g^{-1}\circ f^{-1}(x)\big) \\
        & = \left|\det J_{g^{-1}}(f^{-1}(x)) \det J_{f^{-1}}(x)\right|^{1/p} \psi\big(g^{-1}\circ f^{-1}(x)\big) \\
        & =
        \left|
        \det J_{f^{-1}}(x)\right|^{1/p}\left|\det J_{g^{-1}}(f^{-1}(x))\right|^{1/p} \psi\big(g^{-1}\circ f^{-1}(x)\big) \\
        & = f \cdot (g \cdot \psi)(x)
\end{align}
\end{proof}

\begin{proof}[Proof of \eqref{e:unitarity}]
Suppose that $f \in \Diff(\mathbb{R}^d)$ and let $y = f(x)$ denote the image of $x \in \mathbb{R}^d$. By the inverse function theorem,
$
    J_{f^{-1}}(y) = J_f(x)^{-1}
$
so $\left|\det J_{f^{-1}}(y)\right| = \left|\det J_f(x)\right|^{-1}$.
Now ($p=2$ here)
\begin{align}
    \left\langle \psi \middle| f \cdot \psi' \right\rangle
        & = \int_{\mathbb{R}^d} {\rm d}y \, \overline{\psi(y)} \left|\det J_{f^{-1}}(y) \right|^{1/2} \psi'\big(f^{-1}(y)\big) \\
        & = \int_{\mathbb{R}^d} {\rm d}x \left|\det J_f(x)\right| \overline{\psi(f(x))}  \left|\det J_{f}(x)\right|^{-1/2} \psi'(x)\\
        & = \int_{\mathbb{R}^d} {\rm d}y \, \overline{ \left|\det J_f(x)\right|^{1/2} \psi(f(x))} \psi'(x) \\
        & = \left\langle f^{-1}\cdot \psi \middle| \psi' \right\rangle \enspace .
\end{align}
Now by \eqref{e:grouppcompat},
\begin{align}
    U_f (U_{f^{-1}}(\psi))
        = f \cdot (f^{-1} \cdot \psi)
        = (f \circ f^{-1})(\psi)
        = \psi
\end{align}
and likewise $U_{f^{-1}} (U_f(\psi)) = \psi$ so $U_{f^{-1}}=U_f^{-1}$.
\end{proof}

\begin{proof}[Proof of \eqref{e:FIMdiffinvariance}]
Recalling the definition \eqref{e:trans} of $f\cdot p_\theta$ (for $p=1$) and using the fact that $f \in \Diff(\mathbb{R}^d)$ is independent of $\theta$,
\begin{align}
    I[f\cdot p_\theta]
    & = \int_{\mathbb{R}^d} {\rm d}x \, (f\cdot p_\theta)(x) \left[\nabla_\theta \log  (f\cdot p_\theta)(x) \, \nabla_\theta \log (f\cdot p_\theta)(x)^T\right] \\
    & = \int_{\mathbb{R}^d} {\rm d}x \, \left| \det J_{f^{-1}}(x) \right| p_\theta(f^{-1}(x))\left[\nabla_\theta \log  p_\theta(f^{-1}(x)) \, \nabla_\theta \log p_\theta(f^{-1}(x))^T\right] \\
    & = I[p_\theta]
\end{align}
\end{proof}

\begin{proof}[Proof of \eqref{e:covariance}]
By the chain rule
\begin{align}
    \frac{\partial}{\partial \theta^\mu} \log \widetilde p_\theta(x) 
    & = \sum_{\nu=1}^n \frac{\partial \phi^\nu(\theta)}{\partial \theta^\mu}\left.\frac{\partial}{\partial \theta^\nu}\right|_{\phi(\theta)} \log p_\theta(x) \\
    & = \sum_{\nu=1}^n (J_\phi)^\nu_{\phantom{\nu}\mu}(\theta)\left.\frac{\partial}{\partial \theta^\nu}\right|_{\phi(\theta)} \log p_\theta(x)
\end{align}
and thus
\begin{align}
    \widetilde I[\theta]
    & := \underset{x \sim \widetilde p_\theta}{\mathbb{E}}\left[\nabla_\theta \log \widetilde p_\theta(x) \, \nabla_\theta \log \widetilde p_\theta(x)^T\right] \\
    & = J_\phi(\theta)^T I(\phi(\theta)) J_\phi(\theta)
\end{align}
\end{proof}

\begin{proof}[Proof of \eqref{e:phaseinv}]
Let $\widetilde{\psi}_\theta := e^{i \omega(\theta)} \psi_\theta$. Then
\begin{align}
    \left\langle \frac{\partial\widetilde{\psi}_\theta}{\partial \theta^\mu}  \middle| \frac{\partial\widetilde{\psi}_\theta}{\partial \theta^\nu}  \right\rangle
    & = \left\langle \frac{\partial\psi_\theta}{\partial \theta^\mu}  + i \frac{\partial \omega}{\partial \theta^\mu}\psi_\theta \middle| \frac{\partial\psi_\theta}{\partial \theta^\nu}  + i \frac{\partial \omega}{\partial \theta^\nu} \psi_\theta \right\rangle \\
    & = \left \langle \frac{\partial\psi_\theta}{\partial \theta^\mu}  \middle| \frac{\partial\psi_\theta}{\partial \theta^\nu}  \right\rangle + \frac{\partial \omega}{\partial \theta^\mu} \frac{\partial \omega}{\partial \theta^\nu} - i \frac{\partial \omega}{\partial \theta^\mu} \left\langle \psi_\theta \middle| \frac{\partial\psi_\theta}{\partial \theta^\nu}  \right\rangle + i \frac{\partial \omega}{\partial \theta^\nu} \left\langle \frac{\partial\psi_\theta}{\partial \theta^\mu}  \middle| \psi_\theta \right\rangle
\end{align}
and
\begin{align}
    \left\langle \frac{\partial\widetilde{\psi}_\theta}{\partial \theta^\mu}   \middle| \widetilde{\psi}_\theta \right\rangle
    & = \left\langle \frac{\partial\psi_\theta}{\partial \theta^\mu}  + i \frac{\partial \omega}{\partial \theta^\mu}\psi_\theta \middle| \psi_\theta \right\rangle \\
    & = \left\langle \frac{\partial\psi_\theta}{\partial \theta^\mu}  \middle| \psi_\theta \right\rangle - i \frac{\partial \omega}{\partial \theta^\mu}
 \end{align}
so
\begin{align}
    \left\langle \frac{\partial\widetilde{\psi}_\theta}{\partial \theta^\mu}   \middle| \widetilde{\psi}_\theta \right\rangle
    \left\langle \widetilde{\psi}_\theta \middle| \frac{\partial\widetilde{\psi}_\theta}{\partial \theta^\nu}  \right\rangle
    & = \left\langle \frac{\partial\psi_\theta}{\partial \theta^\mu}  \middle| \psi_\theta \right\rangle \left\langle \psi_\theta \middle| \frac{\partial\psi_\theta}{\partial \theta^\nu} \right\rangle + \frac{\partial \omega}{\partial \theta^\mu} \frac{\partial \omega}{\partial \theta^\nu} - i \frac{\partial \omega}{\partial \theta^\mu} \left\langle \psi_\theta \middle| \frac{\partial\psi_\theta}{\partial \theta^\nu}  \right\rangle + i \frac{\partial \omega}{\partial \theta^\nu} \left\langle \frac{\partial\psi_\theta}{\partial \theta^\mu}  \middle| \psi_\theta \right\rangle
\end{align}
So
\begin{equation}
    \left\langle \frac{\partial\widetilde{\psi}_\theta}{\partial \theta^\mu}   \middle| \frac{\partial\widetilde{\psi}_\theta}{\partial \theta^\nu} \right\rangle - \left\langle \frac{\partial\widetilde{\psi}_\theta}{\partial \theta^\mu}   \middle| \widetilde{\psi}_\theta \right\rangle
    \left\langle \widetilde{\psi}_\theta \middle| \frac{\partial\widetilde{\psi}_\theta}{\partial \theta^\nu}  \right\rangle
    =
    \left\langle \frac{\partial\psi_\theta}{\partial \theta^\mu}   \middle| \frac{\partial\psi_\theta}{\partial\theta^\nu} \right\rangle - \left\langle \frac{\partial\psi_\theta}{\partial \theta^\mu}   \middle| \psi_\theta \right\rangle
    \left\langle \psi_\theta \middle| \frac{\partial\psi_\theta}{\partial \theta^\nu}  \right\rangle
\end{equation}
\end{proof}

\begin{proof}[Proof of \eqref{e:qgt_unnormalized} and \eqref{e:force_unnormalized}]
Details are provided only for \eqref{e:qgt_unnormalized} since \eqref{e:force_unnormalized} follows similarly. By the chain rule
\begin{align}
    \frac{\partial \psi_\theta}{\partial \theta^\mu}
    & = \frac{1}{\langle \Psi_\theta | \Psi_\theta \rangle^{1/2}}\left[ \frac{\partial \Psi_\theta}{\partial \theta^\mu} - \frac{1}{2} \frac{\Psi_\theta}{\langle \Psi_\theta | \Psi_\theta \rangle} \frac{\partial \langle \Psi_\theta | \Psi_\theta \rangle}{\partial \theta^\mu}\right]
\end{align}
So
\begin{align}
\left\langle 
\frac{\partial\psi_\theta}{\partial \theta^\mu}   
\middle| 
\frac{\partial\psi_\theta}{\partial \theta^\nu} \right\rangle
& =
\frac{1}{\langle \Psi_\theta | \Psi_\theta \rangle}\left\langle 
 \frac{\partial \Psi_\theta}{\partial \theta^\mu} - \frac{1}{2} \frac{\Psi_\theta}{\langle \Psi_\theta | \Psi_\theta \rangle} \frac{\partial \langle \Psi_\theta | \Psi_\theta \rangle}{\partial \theta^\mu}
\middle| 
\frac{\partial \Psi_\theta}{\partial \theta^\nu} - \frac{1}{2} \frac{\Psi_\theta}{\langle \Psi_\theta | \Psi_\theta \rangle} \frac{\partial \langle \Psi_\theta | \Psi_\theta \rangle}{\partial \theta^\nu}
\right\rangle \\
& =
\frac{1}{\langle \Psi_\theta | \Psi_\theta \rangle}
\left[
\left\langle 
\frac{\partial\Psi_\theta}{\partial \theta^\mu}   
\middle| 
\frac{\partial\Psi_\theta}{\partial \theta^\nu} \right\rangle
+
\frac{1}{4}
\frac{1}{\langle \Psi_\theta | \Psi_\theta \rangle}
\frac{\partial \langle \Psi_\theta | \Psi_\theta \rangle}{\partial \theta^\mu}
\frac{\partial \langle \Psi_\theta | \Psi_\theta \rangle}{\partial \theta^\nu} + \right. \\
& \quad
\left.
-
\frac{1}{2}
\frac{1}{\langle \Psi_\theta | \Psi_\theta \rangle}
\left\langle
\frac{\partial \Psi_\theta}{\partial \theta^\mu}
\middle|
\Psi_\theta
\right\rangle
\frac{\partial \langle \Psi_\theta | \Psi_\theta \rangle}{\partial \theta^\nu}
-
\frac{1}{2}
\frac{1}{\langle \Psi_\theta | \Psi_\theta \rangle}
\left\langle
\Psi_\theta
\middle|
\frac{\partial \Psi_\theta}{\partial \theta^\nu}
\right\rangle
\frac{\partial \langle \Psi_\theta | \Psi_\theta \rangle}{\partial \theta^\mu}
\right]
\end{align}
and
\begin{align}
    \left\langle \frac{\partial\psi_\theta}{\partial \theta^\mu}   \middle| \psi_\theta \right\rangle
    \left\langle \psi_\theta \middle| \frac{\partial\psi_\theta}{\partial \theta^\nu}  \right\rangle
    & =
\frac{1}{\langle \Psi_\theta | \Psi_\theta \rangle^2}
\left[\left\langle\frac{\partial \Psi_\theta}{\partial \theta^\mu} \middle| \Psi_\theta \right\rangle- \frac{1}{2} \frac{\partial \langle \Psi_\theta | \Psi_\theta \rangle}{\partial \theta^\mu}\right]
\left[\left\langle\Psi_\theta \middle|\frac{\partial \Psi_\theta}{\partial \theta^\nu} \right\rangle- \frac{1}{2} \frac{\partial \langle \Psi_\theta | \Psi_\theta \rangle}{\partial \theta^\nu}\right]
\end{align}
Expanding out one finds that the offending terms cancel and we obtain
\begin{equation}
    \left\langle \frac{\partial\psi_\theta}{\partial \theta^\mu}   \middle| \frac{\partial\psi_\theta}{\partial\theta^\nu} \right\rangle - \left\langle \frac{\partial\psi_\theta}{\partial \theta^\mu}   \middle| \psi_\theta \right\rangle
    \left\langle \psi_\theta \middle| \frac{\partial\psi_\theta}{\partial \theta^\nu}  \right\rangle 
    =
    \frac{1}{\langle \Psi_{\theta} | \Psi_{\theta} \rangle}\left.\left\langle \frac{\partial \Psi_{\theta}}{\partial \theta^{\mu}}  \right| \frac{\partial \Psi_{\theta}}{\partial \theta^{\nu}} \right\rangle -
	\frac{1}{\langle \Psi_{\theta} | \Psi_{\theta} \rangle^2}\left.\left\langle \frac{\partial \Psi_{\theta}}{\partial \theta^{\mu}} \right| \Psi_{\theta}\right\rangle  \left\langle \Psi_{\theta} \left| \frac{\partial \Psi_{\theta}}{\partial \theta^{\nu}}  \right\rangle\right.
\end{equation}
\end{proof}

\begin{proof}[Proof of \eqref{e:holotdvp}]
Using \eqref{e:holoG} and \eqref{e:holoforce} we obtain the following $\theta = \theta_1 \oplus \theta_2$ decompositions,
\begin{align}
    g(\theta)\dot{\theta}
    &=
    \begin{bmatrix}
    \operatorname{Re}[S(z)] \dot{\theta}_1 - \operatorname{Im}[S(z)]\dot{\theta}_2\\
    \operatorname{Im}[S(z)] \dot{\theta}_1 + \operatorname{Re}[S(z)]\dot{\theta}_2
    \end{bmatrix} 
    \\
    -\operatorname{Re}[F(\theta)] 
    &=
    \begin{bmatrix}
    -\operatorname{Re}[F(z)] \\
    -\operatorname{Im}[F(z)]
    \end{bmatrix}
    \\
    \operatorname{Im}[F(\theta)] 
    &=
    \begin{bmatrix}
    \operatorname{Im}[F(z)] \\
    -\operatorname{Re}[F(z)]
    \end{bmatrix}
\end{align}
Plug these expressions into the evolution equations \eqref{e:tdvp} and consider a linear combination consisting of the $\theta_1$ rows superposed with an imaginary unit multiplying the $\theta_2$ rows to obtain \eqref{e:holotdvp}.
\end{proof}

\begin{proof}[Proof of \eqref{e:holorate}]
Taking the time derivative of the loss function $\mathcal{
L}$ defined in \eqref{e:gradientflow} and assuming that the holomorphic constraints \eqref{e:holo} are satisfied we obtain,
\begin{align}
    \dot{\mathcal{L}}(\theta)
    & = \dot{\theta}_1^T \nabla_{\theta_1} \mathcal{L}(\theta) + \dot{\theta}_2^T \nabla_{\theta_2} \mathcal{L}(\theta) \\
    & = \dot{\theta}_1^T \operatorname{Re}[F(z)] + \dot{\theta}_2^T \operatorname{Im}[F(z)] \\
    & = \operatorname{Re}[\dot{z}^\dag F(z)] \\
    & = 
    \begin{cases}
    -\operatorname{Re}[\dot{z}^\dag S(z) \dot{z}] \\
    \operatorname{Re}[i \dot{z}^\dag S(z) \dot{z}]
    \end{cases} \\
    & = 
    \begin{cases}
    -\dot{z}^\dag S(z) \dot{z} \\
    0
    \end{cases}
\end{align}
The first equality is the chain rule. The second equality used $\nabla \mathcal{L}(\theta) = \operatorname{Re}[F(\theta)]$ together with \eqref{e:holoforce}. The third equality used $\dot{z}=\dot{\theta}_1 + i\dot{\theta}_2$. The fourth equality used \eqref{e:holotdvp} and the final equality used the fact that $S(z) \in \mathbb{H}^m_+$ is Hermitian positive semi-definite.
\end{proof}

\begin{proof}[Proof of \eqref{e:gF}]
Recalling our conventions for complex covariance matrices in section \ref{sec:notation},
\begin{align}
    \cov(\sigma_\theta,\sigma_\theta)
    & = \mathbb{E}[\sigma_\theta(x) \sigma_\theta(x)^\dag] - \mathbb{E}[\sigma_\theta(x)] \mathbb{E}[\sigma_\theta(x)]^\dag \\
    \cov(\sigma_\theta,\sigma_\theta)^T
    & = \mathbb{E}[\overline{\sigma_\theta}(x) \sigma_\theta(x)^T] - \mathbb{E}[\overline{\sigma_\theta}(x)] \mathbb{E}[\sigma_\theta(x)]^T \\
    & = G(\theta)
\end{align}
Similarly,
\begin{align}
    \cov(l_\theta,\sigma_\theta)
    & = \mathbb{E}[l_\theta(x) \sigma_\theta(x)^\dag] - \mathbb{E}[l_\theta(x)] \mathbb{E}[\sigma_\theta(x)]^\dag \\
    \cov(l_\theta,\sigma_\theta)^T
    & = \mathbb{E}[\overline{\sigma_\theta}(x) l_\theta(x)] - \mathbb{E}[\overline{\sigma_\theta}(x)] \mathbb{E}[l_\theta(x)]
\end{align}
\end{proof}

\begin{proof}[Proof of \eqref{e:canongrad}]
Starting from \eqref{e:stochoptim} we obtain,
\begin{align}
    \nabla \mathcal{L}(\theta)
    & = \mathbb{E}\left[
    \left(\hat{\mathcal{L}}_\theta(x)\mathbbm{1} - \frac{B}{2}\right)\nabla_\theta \log \rho_\theta(x) \right]
    +
    \mathbb{E}\left[\nabla_\theta
    \hat{\mathcal{L}}_\theta(x)\right] \\
    & = \frac{1}{2}\mathbb{E}\left[
    \left(l_\theta(x)\mathbbm{1} - B\right)\big(\overline{\sigma_\theta}(x) + \sigma_\theta(x)\big) \right]
    +
    \frac{1}{2}\mathbb{E}\left[\nabla_\theta
    l_\theta(x)\right]
\end{align}
where we have used 
\begin{align}
    \nabla_\theta \log \rho_\theta(x)
    & = \frac{\nabla_\theta \rho_\theta(x)}{\rho_\theta(x)} \\
    & = \frac{\nabla_\theta \overline{\psi_\theta}(x) \, \psi_\theta(x) + \overline{\psi_\theta}(x)\nabla_\theta \psi_\theta(x)}{|\psi_\theta(x)|^2} \\
    & = \overline{\sigma_\theta}(x) + \sigma_\theta(x)
\end{align}
Now
\begin{align}
    \mathbb{E}\left[\nabla_\theta
    l_\theta(x)\right]
    & = \int_{\mathbb{R}^d} {\rm d}x \, | \psi_\theta(x)|^2 \left[\frac{\nabla_\theta (H\psi_\theta)(x)}{\psi_\theta(x)} - \frac{(H\psi_\theta)(x)}{\psi_\theta(x)}\frac{\nabla_\theta \psi_\theta(x)}{\psi_\theta(x)}\right] \\
    & = \int_{\mathbb{R}^d} {\rm d}x \, \overline{\psi_\theta}(x) \nabla_\theta (H\psi_\theta)(x) -\mathbb{E}\big[l_\theta(x) \sigma_\theta(x)\big] \\
    & = \int_{\mathbb{R}^d} {\rm d}x \, \overline{(H\psi_\theta)}(x) \nabla_\theta \psi_\theta(x) -\mathbb{E}\big[l_\theta(x) \sigma_\theta(x)\big] \\
    & = \int_{\mathbb{R}^d} {\rm d}x \, |\psi_\theta(x)|^2 \overline{\left[\frac{(H\psi_\theta)(x)}{\psi_\theta(x)}\right]}\frac{\nabla_\theta \psi_\theta(x)}{\psi_\theta(x)}  -\mathbb{E}\big[l_\theta(x) \sigma_\theta(x)\big] \\
    & = \mathbb{E}\big[\overline{l_\theta}(x)\sigma_\theta(x)-l_\theta(x)\sigma_\theta(x)\big]
\end{align}
where we have interchanged the order of operations of the gradient $\nabla_\theta$ with the Hamiltonian $H$ and also used the fact that $H$ is Hermitian. Thus,
\begin{align}
    \nabla \mathcal{L}(\theta)
    & = \frac{1}{2} \mathbb{E}\big[l_\theta(x)\overline{\sigma_\theta}(x) + \overline{l_\theta}(x)\sigma_\theta(x)\big] - \frac{1}{2} B \, \mathbb{E}\big[\sigma_\theta(x) + \overline{\sigma_\theta}(x)\big] \\
    & = \operatorname{Re}\mathbb{E}\big[\big(l_\theta(x) - B\big)\overline{\sigma_\theta}(x)\big]
\end{align}
Alternatively, starting from the definition of the loss function \eqref{e:gradientflow},
\begin{align}
    \nabla \mathcal{L}(\theta) 
        & = \frac{\langle \nabla_\theta \psi_\theta | H \psi_\theta\rangle + \langle \psi_\theta | H \nabla_\theta \psi_\theta\rangle}{2}   \\
        & = \frac{\langle \nabla_\theta \psi_\theta | H \psi_\theta\rangle + \langle H\psi_\theta | \nabla_\theta \psi_\theta\rangle}{2} \\
        & = \frac{1}{2}\langle \nabla_\theta \psi_\theta | H\psi_\theta \rangle
        +
        \textrm{c.c.}
        \\
        & =
        \frac{1}{2}{\mathbb{E}} \big[  l_\theta(x) \overline{\sigma_\theta}(x) \big] + \textrm{c.c.} \\
        & =
        {\mathbb{E}} \operatorname{Re}\big[ l_\theta(x) \overline{\sigma_\theta}(x) \big]
\end{align}
where we have used the product rule in the first equality, Hermiticity of $H$ in the second equality and conjugate symmetry of $\langle \cdot | \cdot \rangle$ in the third equality.
\end{proof}

\begin{proof}[Proof of \eqref{e:levelsets}]
By definition of the zero level set,
\begin{align}
    L_0(\operatorname{mod} f\cdot \psi)
        & := \{ x \in \mathbb{R}^d : | \det J_{f^{-1}}(x)|^{1/p}\operatorname{mod} \psi\big(f^{-1}(x)\big) = 0 \} \\
        & = \{ x \in \mathbb{R}^d : \operatorname{mod} \psi\big(f^{-1}(x)\big) = 0 \} \\
        & =: (\operatorname{mod}\psi \circ f^{-1})^{-1}[\{0\}]\\
        & = f \big((\operatorname{mod}\psi)^{-1}[\{0\}]\big) \\
        & = f \big(L_0(\operatorname{mod}\psi)\big)
\end{align}
where the second equality used the fact that $f$ is a diffeomorphism to divide out the everywhere nonzero Jacobian factor $|\det J_{f^{-1}}(x)| > 0$, the third equality is by definition of the pre-image and the fourth equality follows from the property of pre-images under a composition of maps. By the same reasoning,
\begin{align}
    L_\theta(\arg f\cdot \psi)
        & = \{ x \in \mathbb{R}^d : \arg\psi\big(f^{-1}(x)\big) = \theta \} \\
        & =: (\arg \psi \circ f^{-1})^{-1}[\{\theta\}]\\
        & = f\big((\arg \psi)^{-1}[\{\theta\}]\big)\\
        & = f \big(L_\theta(\arg\psi)\big)
\end{align}
\end{proof}

\begin{proof}[Proof of Lemma \ref{lem:rep}]
Starting with the expression $\psi(gx)=\rho(g)\psi(x)$, taking the complex modulus, raising to the $p$th power and integrating we obtain
\begin{equation}
    \int_{\mathbb{R}^d} {\rm d}x \, |\psi(gx)|^p 
    = |\rho(g)|^p\int_{\mathbb{R}^d} {\rm d}x \, |\psi(x)|^p
\end{equation}
Now changing integration variables on the left-hand side,
\begin{equation}
    |{\det (g^{-1}})| \int_{\mathbb{R}^d} {\rm d}x \, |\psi(x)|^p
    = |\rho(g)|^p\int_{\mathbb{R}^d} {\rm d}x \, |\psi(x)|^p
\end{equation}
Recalling that $\Vert \psi \Vert_p \neq 0$ and that $G$ is an orthogonal group we conclude $|\rho(g)| = 1$, as required for a one-dimensional unitary representation. Recalling the definition \eqref{e:trans} we obtain
\begin{equation}
    (f\cdot \psi)(gx)
    = \left|\det J_{f^{-1}}(gx)\right|^{1/p} \psi\big(f^{-1}(gx)\big) \enspace .
\end{equation}
Now $\psi\big(f^{-1}(gx)\big) = \psi\big(gf^{-1}(x)\big)=\rho(g)\psi\big(f^{-1}(x)\big)$. In addition, as shown in \cite{papamakarios2021normalizing}, $\left|\det J_{f^{-1}}(gx)\right|=\left|\det J_{f^{-1}}(x)\right|$ and therefore $(f\cdot \psi)(gx) = \rho(g)\,(f\cdot \psi)(x)$.
\end{proof}

\section{Architecture and Training Details}\label{sec:architecture_and_training_details}

In this section, additional details on the architecture and training procedure used for the experiments are provided.

\subsection{Architecture and Training Procedure} \label{sec: architecture}

The training procedure requires both sampling as well as calculating the log probabilities for a given sample. As such, efficiency in both the forward and backward pass of a normalizing flow is desired, which motivates the choice of RealNVP \cite{realnvp2017} as the architecture for the flow. The architecture is modeled off the nflows package \cite{nflows} and built in PyTorch \cite{pytorch}.

The normalizing flows are trained using natural gradient descent, using the Adam optimizer \cite{kingma2017adam} applied to natural gradient estimates. In the calculation of the Fisher information matrix, a preconditioning term $\gamma$ of 0.1 is added to the diagonal in order to stabilize training. We start with an initial learning rate of $0.01$ - this initial learning rate is decayed using a cosine decay schedule with no warm restarts \cite{loshchilov2017sgdr}. The per-sample gradients necessary for calculating the Fisher information matrix are calculated using the Backpack package \cite{dangel2020backpack}. An ablation study comparing different optimization methods is shown in Table \ref{tab:ablation_sr}.

\begin{table}[h]
    \centering
        \begin{tabular}{cccc}
        \hline
        Dimension &          2  &            5  &          10 \\
        \hline
Natural Gradient, $\gamma=1.0$       &  -3.84±0.53 &  -48.76±4.77 &  -3.24±0.42 \\
Natural Gradient, $\gamma=1.0$, Adam &  \textbf{-4.69±0.02} &  -46.75±6.12 &  \textbf{-7.58±0.18} \\
Natural Gradient, $\gamma=0.1$       &  -4.38±0.29 &  -47.41±5.15 &  -6.78±0.59 \\
Natural Gradient, $\gamma=0.1$, Adam &  -4.69±0.02 &  \textbf{-50.12±3.06} &  -7.52±0.26 \\
Standard Gradient, Adam          &  -4.68±0.02 &  -48.99±4.82 &  -7.46±0.26 \\
Gaussian             &  -4.39±0.39 &  -40.22±7.64 &  -6.92±0.43 \\
        \hline
        \hline
        Dimension &           25  &          50  &          100 \\
        \hline
Natural Gradient, $\gamma$=1.0       &  -35.78±7.49 &  -28.38±3.76 &  -75.85±2.95 \\
Natural Gradient, $\gamma$=1.0, Adam &  -47.87±3.43 &  -39.79±1.39 &  -79.43±2.17 \\
Natural Gradient, $\gamma$=0.1       &  -46.63±2.93 &  \textbf{-41.51±1.20} &  \textbf{-80.11±1.77} \\
Natural Gradient, $\gamma$=0.1, Adam &  \textbf{-48.29±3.46} &  -39.91±1.75 &  -79.66±2.54 \\
Standard Gradient, Adam          &  -45.98±3.82 &  -40.97±1.08 &  -76.51±3.18 \\
Gaussian             &  -42.53±2.77 &  -35.95±1.43 &  -72.83±4.00 \\
\hline
        \end{tabular}
        \vspace{0.1in}
    \caption{\textit{Ablation study on some of the different options for using the natural gradient (with or without the Adam optimizer and changing the preconditioning terms $\gamma$ for the Fisher information matrix). Using the standard gradient with the Adam optimizer is also considered. This ablation study uses the standard normalizing flow approach (no symmetrization or adiabatic retraining). The results shown indicate the mean ending energy across ten runs using different random initializations, with the error bounds equal to two times the standard error.}}
    \label{tab:ablation_sr}
\end{table}

\subsection{Randomly Generating Hamiltonians}\label{sec:generating_u_and_h}

In this subsection, we describe the process for selecting the matrices $h_{xx}$ and $u$ which define the Hamiltonian. The procedure for selecting $u$ for dimension $d$ is as follows:
\begin{itemize}
    \item Select $d$ eigenvalues uniformly randomly from the interval $[0.1, 2]$. Let $\Sigma$ be a $d\times d$ diagonal matrix with these eigenvalues along the diagonal.
    \item Sample a random matrix $U$ from the Haar distribution over the orthogonal group in dimension $d$.
    \item Let $u$ equal $U\Sigma U^T$.
\end{itemize}
The procedure for selecting $h_{xx}$ is identical, except in the final step, $h_{xx}$ is set to be equal to $-U\Sigma U^T$. This ensures that $u$ is positive definite and $h_{xx}$ is negative definite, while maintaining bounds on the condition number of each matrix. For each dimension $d$, we run each approach (normalizing flows, symmetric normalizing flows, the normalizing flows approaches with adiabatic retraining, and Gaussian state approximation) ten times using different random initializations.

\subsection{Adiabatic Retraining and Flow-Distance Regularization} \label{sec: adiabatic_retraining}

Following the recent work of \cite{hackett2021flow}, we investigate two methods for improving the final energy, namely adiabatic retraining and flow-distance regularization. Adiabatic retraining involves varying the parameters of the target objective in such a way that it interpolates between a relatively simple problem to a more complicated problem. To implement this, the quadratic term in \eqref{e:ham} is multiplied by a term $\alpha$, where $\alpha$ ranges from $0$ to $1$ during training. Following \cite{hackett2021flow}, $\alpha$ is exponentially decayed - in other words, we let
\begin{equation}
    \alpha = \frac{e^{-kt}-e^{-k}}{1-e^{-k}},
\end{equation}
where $k$ is a hyperparameter. Unlike \cite{hackett2021flow}, adiabatic retraining is used during the entire training procedure, rather than for a interval in the middle of training.

Hackett et al. \cite{hackett2021flow} also introduce flow-distance regularization, which imposes a penalty on the flow for transforming samples $z$ from the base distribution to significantly different outputs. This penalty is enforced using a $l_2$ norm between samples from the base distribution and the outputs. The penalty is annealed to zero by the end of training. Unlike Hackett et al., we did not find a significant difference when using flow distance regularization.

\subsection{Gaussian State Approximation} \label{sec: gaussian_state_approx}
As a point of comparison for the normalizing flows approach, the energy of an optimal Gaussian wave function \eqref{eq: gaussian_wave_fcn} was estimated by optimizing over the variational parameters using Riemannian gradient descent. Setting $\lambda_{ijkl}=3\delta_{ij}\delta_{kl} u_{ik}$ and $h=\diag(h_{xx},\mathbbm{1})$ in the \eqref{e:ham} and setting $B=0$ in \eqref{eq: gaussian_wave_fcn}  we obtain,
\begin{align}
    \langle \psi_G | H \psi_G \rangle =& \frac{1}{4}\tr(A) + \frac{1}{4} \text{sum}\left(h_{xx}\odot A^{-1}\right) + \frac{1}{32}\left(\diag\left(A^{-1}\right) + 2\mu^{\odot 2}\right)^T u \left(\diag\left(A^{-1}\right) + 2\mu^{\odot 2}\right) + \\
    & \frac{1}{16}\text{sum}\left(u\odot \left(A^{-1}\right)^{\odot 2}\right)+\frac{1}{2}\mu^T\left(h_{xx} + \frac{1}{2} u A^{-1}\right)\mu \notag
\end{align}

\section{Loss Function Estimator}\label{sec:ibp_appendix}

In this section we illustrate the adjoint loss function estimator $\hat{\mathcal{L}}_\theta^{\rm (adj)}(x)$ as an alternative to the canonical estimator $\hat{\mathcal{L}}_\theta^{\rm (can)}(x)$ in the simple example of the harmonic oscillator Hamiltonian \eqref{eq:simple_harmonic_oscillator},
\begin{align}
     \hat{\mathcal{L}}_\theta^{\rm (can)}(x) 
     & := \frac{1}{2}l_\theta(x) \\
     & = \frac{1}{4} \left[- \frac{\psi_\theta''(x)}{\psi_\theta(x)} + x^2\right] \\
    \hat{\mathcal{L}}_\theta^{\rm (adj)}(x)
    & = \frac{1}{4}\left[\left| \frac{\psi_\theta'(x)}{\psi_\theta(x)}\right|^2 + x^2\right] \label{eq:adjoint_estimator}
\end{align}

It should be noted that unlike the canonical estimator, the adjoint estimator does not exhibit the zero-variance property, as described in Section \ref{sec: stochastic_optimization_and_variance_reduction}. For the example of the simple harmonic oscillator \eqref{eq:simple_harmonic_oscillator}, consider the variational class given by the family of wave functions \eqref{e:d=1}, parametrized by $a > 0$ with $b$ fixed to zero ($a=1$ is the ground eigenfunction \eqref{eq:simple_harmonic_oscillator_ground_eigenfcn}). Figure \ref{fig:var_of_gaussian} shows the variance of both estimators as a function of the variational parameter $a$, showing clearly the zero-variance principle of the canonical estimator. Although the adjoint estimator is subject to an irreducible quantum uncertainty, the standard error of the estimate can be reduced to zero at the Monte Carlo rate of $1/\sqrt{N}$ simply by increasing the number of samples $N$.

\begin{figure}
    \centering
    \includegraphics[width=0.8\textwidth]{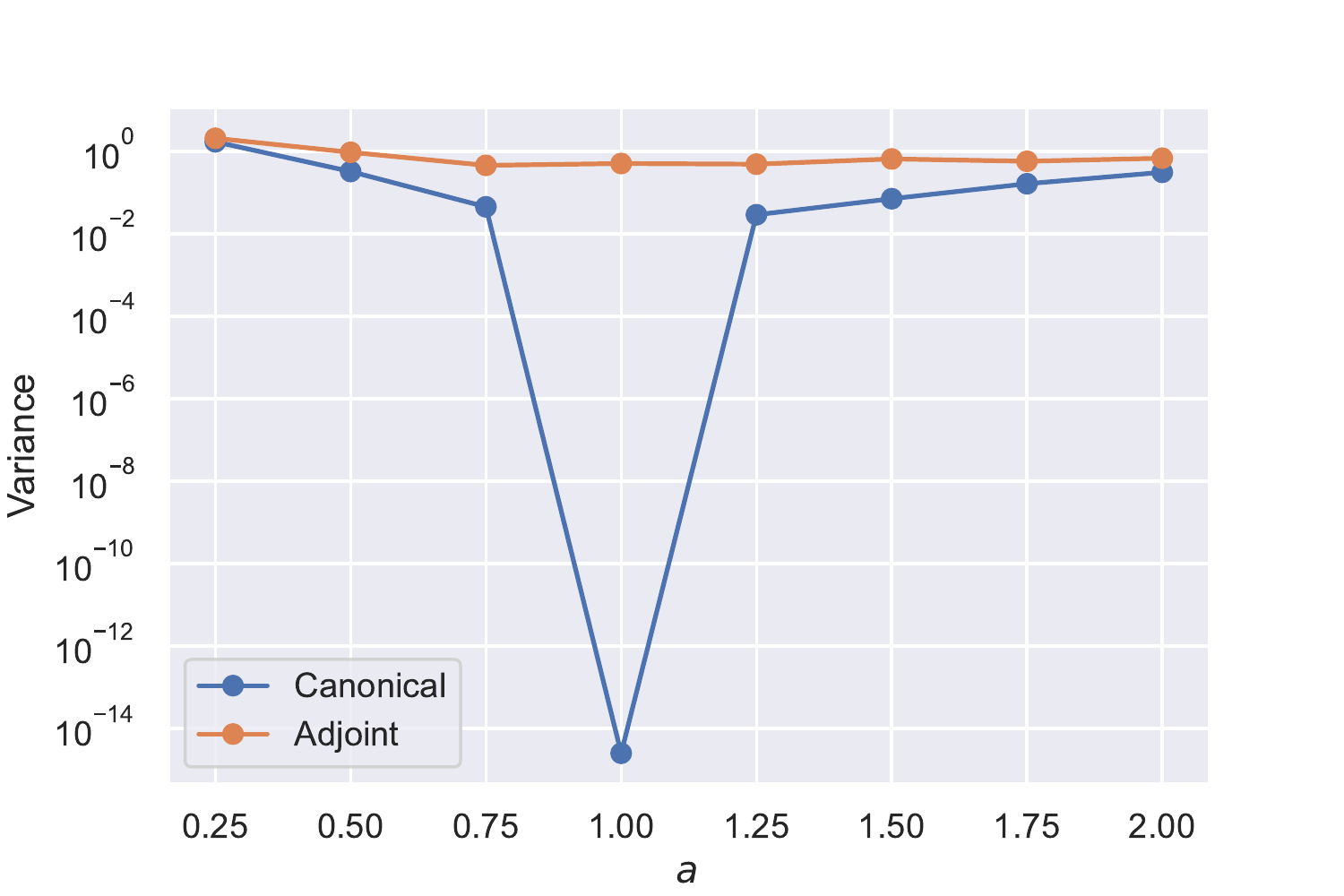}
    \caption{\textit{A plot of the variance of the canonical and adjoint energy estimators $\hat{\mathcal{L}}_\theta^{\rm (can)}(x)$ and $\hat{\mathcal{L}}^{\rm (adj)}_\theta(x)$ for the simple harmonic oscillator Hamiltonian \eqref{eq:simple_harmonic_oscillator} as a function of the variational parameter $a$
    in \eqref{e:d=1} with $b=0$ fixed.}}
    \label{fig:var_of_gaussian}
\end{figure}

\end{appendix}

\nocite{*}
\bibliographystyle{plain}
\bibliography{main}
\end{document}